\documentclass[preprint,sort,compress,12pt]{elsarticle}
\usepackage[utf8]{inputenc}
\usepackage{amsthm}
\usepackage{amsmath}
\usepackage{mathtools} 
\usepackage{amssymb}
\usepackage{natbib}
\usepackage{graphicx}
\usepackage{latexsym}
\usepackage{epstopdf}
\usepackage{color}
\usepackage{hyperref}
\usepackage{xurl}
\usepackage{mathrsfs}
\usepackage{verbatim}
\usepackage{amsmath}
\usepackage{amsmath, bm}
\usepackage{graphicx}
\usepackage{geometry}
\usepackage{setspace}
\usepackage{etoolbox}
\newtheorem{theorem}{Theorem}[section]

\newtheorem{lemma}[theorem]{Lemma}
\newtheorem{proposition}[theorem]{Proposition}
\newtheorem{remark}[theorem]{Remark}
\numberwithin{equation}{section}

\makeatletter
\apptocmd{\appendix}{

    \setcounter{theorem}{0}
    \setcounter{equation}{0}
}{}{}
\makeatother

\def\dfrac#1#2{\frac{\displaystyle {#1}}{\displaystyle {#2}}}

\def\Z{{\mathbb Z}}

\def\R{{\mathbb R}}
\def\C{{\mathbb C}}
\def\T{{\mathbb T}}

\journal{\empty}
\date{}

\begin{document}

\begin{frontmatter}
\makeatletter
\let\old@makefnmark\@makefnmark
\renewcommand\@makefnmark{\hbox{$^{\@thefnmark}$}}
\let\old@footnotetext\@footnotetext
\renewcommand\@footnotetext[1]{\old@footnotetext{$^{\@thefnmark}$}#1}
\makeatother
\renewcommand{\thefootnote}{\fnsymbol{footnote}}
\title{Anderson localization for 1-d quasi-periodic Schr\"{o}dinger operators with degenerate weights}

\author{Yingdu Dong\footnotemark[1]\footnotemark[5]}
\author{Haoxuan Liu\footnotemark[2]\footnotemark[5]}
\author{Zuhong You\footnotemark[3]\footnotemark[5]}
\author{Xiaoping Yuan\footnotemark[4]\footnotemark[5]}

\begin{abstract}
 We establish Anderson localization for 1-d discrete Schr\"{o}dinger operators with positive weights. The distinctive feature of this work lies in the degeneracy of the weights, with both the potentials and weights assumed to be analytic and quasi-periodic. Operators of this kind originate from distinct  mathematical physics problems, which include the Frenkel-Kontorova model with impurities, the discretisation of singular Sturm-Liouville operators, and the Fisher-KPP lattice equation in heterogeneous media.

\end{abstract}

\begin{keyword}
Anderson localization; Singular Jacobi operators; Spectral theory; Small divisor problem

\end{keyword}

\end{frontmatter}
\makeatletter
\let\old@makefnmark\@makefnmark
\renewcommand\@makefnmark{\hbox{$^{\@thefnmark}$}}
\let\old@footnotetext\@footnotetext
\renewcommand\@footnotetext[1]{\old@footnotetext{$^{\@thefnmark}$}#1}
\makeatother
\renewcommand{\thefootnote}{\fnsymbol{footnote}}
\footnotetext[1]{School of Mathematical Sciences, Fudan University, Shanghai, China (202131130009@mail.bnu.edu.cn)}
\footnotetext[2]{School of Mathematical Sciences, Fudan University, Shanghai, China (23110180021@m.fudan.edu.cn)}
\footnotetext[3]{School of Mathematical Sciences, Fudan University, Shanghai, China (23110180049@m.fudan.edu.cn)}
\footnotetext[4]{School of Mathematical Sciences, Fudan University, Shanghai, China (xpyuan@fudan.edu.cn)}
\footnotetext[5]{The work was supported by National Natural Science Foundation of China (Grant No. 12371189).}

\section{Introduction}
Research on localization for discrete quasi-periodic Schr\"{o}dinger operators has seen considerable advances over the past decades. Developments have been made in several directions: from one to higher dimensions, from single-frequency to multi-frequency, from perturbative to non-perturbative methods, and from Anderson localization to dynamical localization. Pioneering contributions were made by Sinai, Dinaburg, Chulaevsky, Fr\"{o}hlich, Spencer, Eliasson, Jitomirskaya, Bourgain, Goldstein, and Schlag, among others. Subsequently, this field has attracted a growing number of researchers, leading to numerous advances. We refer to \cite{localz-survey} for historical remarks and to \cite{dynam-locliz,Quanti-Green,almost-reduc-and-locliz} for comprehensive surveys of recent progress.

In this paper, we focus on a kind of 1-d Schr\"{o}dinger operators with positive weights, which are given by
\begin{equation}\label{OP}
   \left[ \overline{H}u \right]_n= \left[ \overline{H}(\lambda,x,y,\omega_1,\omega_2)u \right]_n = w^{-1}(y+n\omega_2)\cdot\left(-u_{n+1}- u_{n-1} + \lambda v(x+n\omega_1)u_n \right).
\end{equation}
Here, $(x, y)\in \T^{2}$ denotes the phase. Denoting $DC$ the set of Diophantine vectors, we assume that the frequency $\omega=(\omega_1,\omega_2) \in DC$. The potential $v$ and the non-negative weight $w$ are assumed to be $1$-periodic analytic functions. Moreover, it is assumed that $w \not\equiv 0$, and there exist zeros of $w$ on the real line. A typical example is $w(y+n\omega_2) = \sin^2(2\pi (y+n\omega_2))$. These conditions will indicate the following two points:
\begin{itemize}
    \item[(i)] For all $y\in \T$ not belonging to zeros of $w$, except for a null set $\mathcal{F}(y)$ of $\omega_2$, $w(y+n\omega_2)>0$ for all $n\in \Z$. 
    \item[(ii)] For all $y \in \T$, there exists a subsequence $\{n_k\}$ such that $w(y+n_k\omega_2)$ tends to zero. 
\end{itemize}
 In fact, as long as $w$ possesses zeros on the real line, the density of irrational rotations orbit guarantees (ii). On the other hand, if $w \not \equiv 0$ and $y$ does not belong to the finite zeros of $w$ on the real line, then excluding a countable set  of $\omega_2$, (i) holds. Note that the weight degeneracy leads to an unbounded operator. Specifically, for any fixed $(x_0,y_0)\in \T^2$, $\overline{H}$ is symmetric and densely-defined in the weighted space $\ell^2_w(\Z)$, for which the inner product is defined as $(h,k)_w=\sum_{n}h_n\bar{k}_nw_n$.  With a slight abuse of notations, we regard $\overline{H}$ as its closure with respect to the graph norm. Later, it will be demonstrated that $\overline{H}$ is self-adjoint.  Actually, $\overline{H}$ can be regarded as a singular Jacobi operator on $\ell^2(\Z)$ after a transformation given later. One can see \cite{singular-jacobi-random} for related results on singular Jacobi operators in random settings. Also relevant are \cite{singular-potent, maryland, singular-potent-longrange, mero-potent} and references therein, which relate to singular Schr\"{o}dinger operators with meromorphic potentials. We remark that, in contrast to the case where only the potential is singular, the presence of a degenerate weight leads to additional issues concerning the definition of the operator and its self-adjoint extensions. Fortunately, only imposing very mild conditions on the weight $w$, it is assured that the operator concerned is essentially self-adjoint, meaning that its closure with respect to the graph norm is self-adjoint. We refer to \cite{Jacobibook} for detailed discussions on the `limit-point' and `limit-circle' criterion for Jacobi operators.

The model in this paper originates from the generalized Frenkel-Kontorova model (GFK model) with impurities, which can describe the vibration dynamics of atoms adsorbed on 2-d rigid crystals surfaces: see \cite{FK-model-paper} or Chapters 2 and 3 in \cite{FK-model-book}. In some cases, e.g. for adsorption on the `(112) plane of a bcc crystal', 
the surface atoms produce a furrowed potential. Therefore, the adatoms (adsorbed atoms) located inside the furrows can be considered as a one-dimensional chain, and the corresponding GFK model is described by the Hamiltonian
\begin{equation*}
\mathcal{H} = \sum_{n}  \frac{1}{2} m_{ n} \dot{x}_{n}^2   + \sum_{ n } \frac{1}{2} K (x_{{n+1}} -  {x}_{n}-a_{\text{int}})^2 + V_{\text{sub}},
\end{equation*}
where $x_n$ denotes the absolute positions of the adatoms, $K$ quantifies the elastic coupling between adjacent atoms, and $V_{\text{sub}}$ is the substrate potential due to the fixed crystal surface. $a_{\text{int}}$ denotes the natural lattice constant of the chain in the absence of the substrate. Moreover, the substrate potential $V_{\text{sub}}$ is commonly assumed to be `on-site' and periodic. The typical and simplest form is $V_{\text{sub}}= \lambda_{\text{sub}}\sum_{n} \left[1-\cos\left(\frac{2\pi x_n}{a_{\text{sub}}}\right) \right]$, with strength $\lambda_{\text{sub}}$ and period $a_{\text{sub}}$ . The impurities are taken into account through parameters $\{m_n\}$: there are more than one kind of atoms with different masses adsorbed. Their distribution is assumed to be disordered. One may consider the substitution $x_n= n\cdot a_{\text{sub}}+u_n$, where $u_n$ represents the displacement of the $n-th$  atom from its nominal substrate potential well minimum. Then we obtain the Hamiltonian 
\begin{equation*}
\mathcal{H} = \sum_{n}  \frac{1}{2} m_{ n} \dot{u}_{n}^2   + \sum_{ n } \frac{1}{2} K (u_{{n+1}} -  {u}_{n} +a_{\text{sub}}-a_{\text{int}})^2 +\lambda_{\text{sub}}\sum_{n} \left[1-\cos\left(\frac{2\pi x_n}{a_{\text{sub}}}\right) \right],
\end{equation*}
and the corresponding equation of motion
\begin{equation*}
m_n \frac{d^2 u_n}{dt^2}= K(u_{n+1}+u_{n-1}-2u_n) -\frac{2\pi\lambda_{\text{sub}}}{a_{\text{sub}}} \sin\left(\frac{2\pi u_n}{a_{\text{sub}}}\right),n\in\Z,
\end{equation*}
which is a discrete sine-Cordon equation. 

Note that the existence and properties of the ground state solution are independent of the impurities. Consequently, its analysis reduces to the standard case, and therefore has been extensively studied, dating back to the pioneering work by Aubry \cite{Aubry}. Now we are in the position to analyze the behavior to study the dynamical behavior around the ground state solution. Denoting $\{\bar{u}_n\}$ the ground state solution and setting $u_n =\bar{u}_n +v_n$, we obtain the equation of motion for $\{v_n\}$ as
\begin{equation}\label{nonlinear}
m_n \frac{d^2 v_n}{dt^2}= K(v_{n+1}+v_{n-1}-2v_n) -\frac{4\pi^2\lambda_{\text{sub}}}{a^2_{\text{sub}}} \cos\left(\frac{2\pi \bar{u}_n}{a_{\text{sub}}}\right)v_n+R_n, n\in\Z,
\end{equation}
where $R_n= \left(\frac{2\pi}{a_{\text{sub}}}\right)^3\lambda_{\text{sub}}\int_{0}^1 \sin\left(\frac{2\pi( \bar{u}_n+tv_n)}{a_{\text{sub}}}\right)v_n^2$. Before diving into the existence of temporal periodic or quasi-periodic breathers for \eqref{nonlinear}. It is appropriate to first consider the following eigenvalue problem 
\begin{equation*}
    v_{n+1}+v_{n-1}-2v_n-\lambda_{sub}V_nv_n = E w_n v_n,
    \end{equation*}
where the potential $V_n=\left(\frac{4\pi^2}{Ka^2_{\text{sub}}} \cos\left(\frac{2\pi \bar{u}_n}{a_{\text{sub}}}\right)\right)$ and the weights  $w_n=\frac{m_n}{K}$.  Define operators $H(\lambda_{sub},V)$ and $W$ as $\left[H(\lambda_{sub})v\right]_n=v_{n+1}+v_{n-1}-2v_n-\lambda_{sub}V_nv_n$ and $\left[Wv\right]_n=w_nv_n$. The localization for the operator $W^{-1}H$, i.e. pure point spectrum with exponentially decayed eigenvectors, is of significance for finding breathers for \eqref{nonlinear}. Since one can write \eqref{nonlinear} as 
\begin{equation}\label{formal}
    W \ddot{v} = H(\lambda_{sub},V)v +R(v) .
\end{equation}
Hence, $\ddot{v}=W^{-1}Hv+ W^{-1}R$. For the linear equation with $R=0$, under the circumstance of localization, immediately we obtain a family of standing wave solutions $\{e^{iE_{\alpha}t}\psi_\alpha(n)\}_\alpha$, where $\{E_{\alpha}\}$ are eigenvalues of $W^{-1}H$ and $\{\psi_\alpha\}$ are corresponding eigenvectors which decay exponentially in $n$. Then regarding $W^{-1}R$ as a perturbation, it is possible to find breathers which are periodic or quasi-periodic in $t$ and decay exponentially in $n$. We remark that when coping with \eqref{formal}, one may attempt to establish localization only for $H$ since $W$ is diagonal. Suppose that there have been a complete orthogonal basis $\{\psi'_\alpha\}$ which are eigenvectors of $H=H(\lambda_{sub},V)$ with exponential decay. Denote $U$ the unitary transformation from the canonical basis to $\{\psi_{\alpha}\}$. Setting $v'=Uv$, one obtain $$UWU^{-1}\ddot{v}'= UHU^{-1}v'+UR=Dv'+UR$$ for some diagonal operator $D$. However, in general, it cannot be expected that $UWU^{-1}$ is still diagonal even in the finite dimension case, as $H$ and $W$ are not assumed to be commutative.

Recall that when $a_{\text{int}}$ and $a_{\text{sub}}$ are incommensurate, $\{\bar{u}_n\}$ behaves quasi-periodically \cite{Aubry}. In this case, the potential is also quasi-periodic and exhibits disorder. It is well-known that without mass impurities (reflected through the weights $\{m_n\}$), the aforementioned eigenvalue problem tends to exhibit localization phenomena. The aim of this paper is to establish similar results when the masses of the absorbed atoms are also distributed quasi-periodically (maybe with different frequencies).  We also mention that, when $a_{\text{int}}$ and $a_{\text{sub}}$ are commensurate, the ground state $\{\bar{u}_n\}$ and the corresponding potential behave periodically. Thus in the standard occasion, extended states are likely to occur. However, our analysis appears to suggest that despite the periodic nature of the potential, mass disorder alone may still induce localization. At least according to our proof, at the high-energy region this holds true. Another related simple model is to consider that there are two kinds of atoms adsorbed, and they are distributed randomly. Thus one may assume that $m_n$ are i.i.d with Bernoulli distribution. Before diving into these issues, in this paper we concentrate on the deterministic disorder case \eqref{OP}. 

In addition, it should be remarked that our model can be viewed as a discrete analog of the singular Sturm-Liouville problem with degenerate weights, which is given by
\begin{equation*}
    -(p(x)u_x)_x + V(x) u = \lambda w(x) u
\end{equation*}
with some boundary conditions. When the weight function $w > 0$, the system is referred to as right-definite \cite{left-defini}. The degeneracy of $w$ introduces singularities that affect the spectral properties of the operator, see \cite{specODEbook}. For Schr\"{o}dinger operators with weights, the continuous regime has been extensively studied, even including the right-indefinite case where the weight $w$ changes its sign. Numerous results concerning eigenvalues and the completeness or incompleteness of eigenvectors have been established; for instance, we refer to \cite{indefi-general, eigenasym, eigenasym2, counter-completen, completness}. The discrete setting, however, has received less attention, and our aim is to establish Anderson localization for \eqref{OP} under reasonable conditions. Consequently, the completeness of the eigenvectors is automatically proved. The spectral properties of the spatial operator is helpful for comprehending the corresponding linear or nonlinear parabolic problem. For instance, based on the result of this paper, we could further investigate traveling front solutions and generalized transition fronts for the following Fisher-KPP lattice equation:
\begin{equation*}
   w(n) \partial_tu(t,n)-u(t,n+1)-u(t,n-1)+2u(t,n)=V(n)u(t,n)(1-u(t,n)),
\end{equation*}
where the weight $w(n)$ represents local heat capacity (or thermal inertia) at site $n$. The unweighted case and a detailed physical discussion of heat conduction are provided in \cite{latticeKPP} and references therein.

The main methods of this paper are large deviation estimates for the Green's function, basically originating from \cite{Green-book,BGS02} and references therein. As a convention, denote $DC_{c_0',A}$ the Diophantine frequency satisfies 
\begin{equation*}
            \|l\cdot\omega\|>c'_{0}|l|^{-A} \   \textup{ for } l\in\Z^{2}\setminus\{0\}, 
    \end{equation*} 
where $A \ge 3$. Then the main result is stated as
\begin{theorem}\label{main thm}
Consider the operator given by
\begin{equation*}
   \left[ \overline{H}u \right]_n= \left[ \overline{H}(\lambda,x,y,\omega_1,\omega_2)u \right]_n = w^{-1}(y+n\omega_2)\cdot\left(-u_{n+1}- u_{n-1} + \lambda v(x+n\omega_1)u_n \right),
\end{equation*}
where the potential $v$ and the non-negative weight $w$ are assumed to be $1$-periodic analytic functions. Moreover, it is assumed that $w \not\equiv 0$, and there exist zeros of $w$ on the real line. Suppose $\omega \in DC_{c_0',A}$ for some $A \ge 3$ and small $c_0'>0$. Then, $\overline{H}$ can be regarded as an self-adjoint operator on $\ell^2_w(\Z)$. Moreover, for any fixed $(x_0,y_0)$ such that $y_0$ does not belong to zeros of $w$ on the real line, there exists a null set $\mathcal{N}=\mathcal{N}(x_0,y_0)$ of $\omega$ such that if $\omega \notin \mathcal{N}$, $\overline{H}$ enjoys the Anderson localization, i.e. $\overline{H}$ has p.p. spectrum with exponentially localized eigenvectors.
\end{theorem}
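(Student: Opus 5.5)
The plan is to reduce the weighted problem to a standard eigenvalue problem for a Jacobi-type operator on $\ell^2(\Z)$, and then run the Bourgain--Goldstein--Schlag scheme (large deviation estimates for Green's functions plus semialgebraic elimination of double resonances).

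\textbf{Step 1: self-adjointness.} Write $w_n=w(y_0+n\omega_2)$ and $v_n=v(x_0+n\omega_1)$. By property (i), after discarding a null set of $\omega_2$ we have $w_n>0$ for all $n$. Then $\overline{H}$ is symmetric on finitely supported sequences in $\ell^2_w(\Z)$. The map $u\mapsto w^{1/2}u$ is unitary from $\ell^2_w$ onto $\ell^2$, and it conjugates $\overline{H}$ to the singular Jacobi operator
\[
J u_n=-a_n u_{n+1}-a_{n-1}u_{n-1}+b_n u_n,\qquad a_n=(w_nw_{n+1})^{-1/2},\quad b_n=\lambda v_n w_n^{-1}.
\]
For essential self-adjointness I will use the limit-point criterion from \cite{Jacobibook}: a Carleman-type condition $\sum_{n\ge0}a_n^{-1}=\sum_{n\ge 0}(w_nw_{n+1})^{1/2}=\infty$, and similarly for $n\le 0$. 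This holds because $w$ is analytic, $w\not\equiv0$, and $n\omega_2$ is equidistributed, so the mean of $\sqrt{w(y)w(y+\omega_2)}$ is positive. Hence $\overline{H}$, understood as its closure, is self-adjoint.

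\textbf{Step 2: eigenvalue equation and transfer matrices.} I will prove localization through Shnol/Berezanskii-type generalized eigenfunction expansions: spectrally almost every $E$ admits a polynomially bounded solution of
\[
-u_{n+1}-u_{n-1}+\lambda v_nu_n=Ew_nu_n .
\]
The key observation is that this equation is a \emph{bounded} analytic quasi-periodic Schrödinger equation with 2-frequency potential $V_E(x,y)=\lambda v(x)-Ew(y)$ on $\T^2$. Its transfer matrices $M_N(x,y,E)$ are analytic with bounded entries. Whether the weight degenerates therefore only matters in Step 1 and in the final weighted norm.

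I will then establish positivity of the Lyapunov exponent, $L(E)>\gamma>0$, in the regime where the method works. This is either the large-$\lambda$ Herman/subharmonic argument, or, at high energy, $|E|$ large via the $Ew(y)$ term with $w\not\equiv0$ (which matches the remark in the introduction). Given positivity, I will invoke the multi-frequency large deviation theorem of \cite{BGS02} for $\log\|M_N\|$, which needs $\omega\in DC_{c_0',A}$, together with the avalanche principle. These give:
\begin{itemize}
\item Green's function bounds on intervals $[a,b]$ of length $N$, off a set of phases of measure $e^{-N^{\sigma}}$;
\item a semialgebraic description of the bad set.
\end{itemize}

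\textbf{Step 3: elimination of double resonances.} This is the main obstacle, made harder by the fact that the phase $(x_0,y_0)$ is fixed and the \emph{frequency} is removed. Following Bourgain--Goldstein, I will:
\begin{itemize}
\item consider the set of $(\omega,E)$ such that both $(x_0,y_0)$ is bad at scale $N$ and $(x_0+n\omega_1,y_0+n\omega_2)$ is bad for some $N^C<|n|<\bar N$;
\item use semialgebraic projection and steep-curve or transversality lemmas in the frequency variable to show this set has measure $\le \bar N^{-c}$;
\item conclude by Borel--Cantelli that it is summable, which produces $\mathcal N(x_0,y_0)$.
\end{itemize}
Transversality in $\omega$ needs extra care because $E$ couples to $y$ through $w$. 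I would try first to treat $V_E$ as a jointly analytic function of $(x,y,E)$ and apply the two-frequency elimination of \cite{Green-book} directly.

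\textbf{Step 4: conclusion.} The paving argument with the resolvent identity turns a polynomially bounded generalized eigenfunction $u$ into exponential decay $|u_n|\le Ce^{-\gamma|n|/2}$. Since $w$ is bounded, $\sum|u_n|^2w_n<\infty$, so every such $u$ is a genuine eigenvector in $\ell^2_w$. Hence the spectrum is pure point with exponentially localized eigenvectors.
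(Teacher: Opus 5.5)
You follow the paper's architecture: unitary reduction to the Jacobi matrix $J$ with Carleman's criterion, Shnol, large deviation bounds for $G_\Lambda(E)=(H_0-EW)_\Lambda^{-1}$, elimination of double resonances in $\omega$ by semialgebraic projection and transversality, and paving. Deriving the Green's function bounds from transfer-matrix large deviations, the avalanche principle and positivity of the Lyapunov exponent, instead of the paper's multiscale induction with Cartan's estimate, is a legitimate alternative. You need both large-coupling regimes at once, not ``either/or'': large $\lambda$ for $|E|\le|\lambda|$ and large $E$ for $|E|>|\lambda|$, as in Cases (1) and (2) of Lemma \ref{initialization}. So, like the paper's proof, you need $|\lambda|>\lambda_0$, which the printed statement omits. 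The genuine gap is your claim that the degeneracy of $w$ matters only in Step 1 and in the final norm.

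\textbf{Where the weight enters the elimination.} You must replace the uncountably many energies by finitely many, depending semialgebraically on $\omega$. From $\psi(k_0)=1$, the Poisson formula gives $\|G_\Lambda(E)\|\ge e^{N/4}$ on a box $\Lambda$ around $k_0$ at the fixed phase. You then need $E$ to be exponentially close to a root $E'$ of $\det(H_{0,\Lambda}-E'W_\Lambda)$, so that the far-away bad sets can be taken at these $\le|\Lambda|$ energies and the resulting set in $(\omega,x,y)$ has small measure. For $H-E$ this is just self-adjointness; for the pencil it is not automatic. Since $G_\Lambda(E)=W_\Lambda^{-1/2}(J_\Lambda-E)^{-1}W_\Lambda^{-1/2}$, you only get $\mathrm{dist}(E,\mathrm{Spec}\,J_\Lambda)\le\|G_\Lambda(E)\|^{-1}/\min_{n\in\Lambda}w(y_0+n\omega_2)$. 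By property (ii), this minimum is not controlled by $\omega\in DC_{c_0',A}$.

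Treating $V_E=\lambda v-Ew$ merely as a jointly analytic family discards exactly the structure needed. For instance, if $V_E$ did not depend on $E$, a resonance at the fixed phase would persist for every $E$.

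\textbf{How the paper fixes it.} The paper imposes an extra, $y_0$-dependent restriction $\min_i\|y_0+(k_0+k)\omega_2-y_i\|>c_1'|k_0+k|^{-A}$ for $|k|\le N_2$ (its \eqref{DC2}), valid for a.e.\ $\omega_2$. Combined with \eqref{lower}, this makes $\|W_\Lambda^{-1/2}\|^2$ polynomial in $N$. Hence $|E-E'|$ is exponentially small for some $E'\in\mathcal{E}_{\omega,s}$, and the Green's function bounds at $E'$ pass to $E$ by a Neumann series. Any substitute must likewise use the pencil. One option is a quantitative lower bound on $\langle W_\Lambda\phi,\phi\rangle$ for finite-volume eigenvectors $\phi$, since this is the $E$-slope of the eigenvalue branches of $H_{0,\Lambda}-EW_\Lambda$.

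\textbf{The same issue in Step 2.} Shnol's theorem for $J$ gives a polynomially bounded $u'$, but $|u_n|=w_n^{-1/2}|u'_n|$. Your Poisson/paving argument needs a polynomially bounded solution of the weighted equation. That again requires control of $w$ along the orbit, or an argument through the three-term recursion.

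Your worry about transversality is misplaced. Once $E$ is replaced by the finitely many $E'\in\mathcal{E}_{\omega,s}$, Lemma \ref{semi2} is applied in $(\omega,x,y)$ along $\omega\mapsto(\omega,(x_0,y_0)+(k_0+n)\omega)$. The coupling of $E$ to $y$ plays no role there.
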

\begin{remark}
    Actually, $\mathcal{N}= \mathcal{F}(y_0)~\cup~\widetilde{\mathcal{R}}(x_0,y_0)$, which will be defined in the last section.
\end{remark}

The remainder of the paper is organized as follows. Section 2 is a preliminary section where necessary notations, definitions and some well-known results are listed. Then, in Section 3, we establish the large deviation theorem for the Green's function through multiscale techniques. Finally, in Section 4, the exponential decay of any generalized eigenvector is obtained. Hence, we prove the Anderson localization for the concerned operator, and obtain the main theorem.

\section{Preliminary}
This section presents the necessary preliminaries for our main result, covering key notations, definitions, and some classical results from functional analysis.

\subsection{Notations and conventions}\label{Ntations}
For convenience, we first list some notations and conventions that will be frequently used. 
\begin{itemize}

\item $X \lesssim Y$ denotes the statement that $X \le CY$ for some absolute constant $C>0$. By $X \ll Y$, we mean that $C$ is sufficiently small. More generally, given parameters such as $a,\gamma$, $X  \lesssim_{a,\gamma} Y$ denotes the statement that $X \le C(a,\gamma) Y$ for some positive constant $C(a,\gamma)$ that depends on $a$ and $\gamma$. Moreover, for those parameters which are fixed all the way through this paper, we will omit the dependence on them for brevity.

\item For any parameter, for example $b$, $b\pm$ means $b\pm\epsilon$ for some small $\epsilon>0$ which can be derived from the contexts.

\item $\Lambda$ is often used to denote subsets in $\Z$. Moreover, for a fixed point in $\Z$ such as $x$, $\Lambda_M(x)$ denotes the set $\{y\in \Z:|y-x|\le M\}$.

\item Denote
\begin{align*}
    &H_{0}(\lambda, x,\omega_{1})=\lambda v(x+n\omega_{1})\delta_{n,n'}+\Delta,\\
    &V(x,\omega_{1})=v(x+n\omega_{1})\delta_{n,n'},\\
    &W(y,\omega_{2})=w(y+n\omega_{2})\delta_{n,n'},\\
    &H(\lambda, E,x,y,\omega_{1},\omega_{2})=H_{0}(\lambda, x,\omega_{1})-EW(y,\omega_{2}),\\
    &H_{N}(\lambda, E,x,y,\omega_{1},\omega_{2})=R_{N}H(\lambda, E,x,y,\omega_{1},\omega_{2})R_{N},\\
    &G_N(\lambda, E,x,y,\omega_{1},\omega_{2})=(H_{N}(\lambda, E,x,y,\omega_{1},\omega_{2}))^{-1},
\end{align*}
where $R_{N}$ represents the coordinate restriction to $[-N,N]\subset \Z$. For simplicity, we will omit some or all of the parameters $\lambda, E, x, y, \omega_{1}, \omega_{2}$ when there is no confusion. Similarly, for any subset $\Lambda\subset \Z$, we denote $H_{\Lambda}=R_{\Lambda} H R_{\Lambda}$ and $G_{\Lambda} = \left(H_\Lambda\right)^{-1}$, where $R_{\Lambda}$ represents the coordinate restriction to $\Lambda$. 

\item For any operator $T$, $T_\Lambda$ denotes the operator $T$ restricted on $\Lambda$ as above.

\item Without loss of generality, we assume that $|\hat{v}(k)|<e^{-k}$ and $|\hat{w}(k)|<e^{-k}$, where $\hat{v}(k)$ and $\hat{w}(k)$ are the Fourier coefficients of $v$ and $w$.

\item For a matrix $A$, we denote by $\|A\|$ the operator norm of $A$, and by $\|A\|_{HS}$ the Hilbert-Schmidt norm of $A$. 

\item Since there exist at most finite zeros of the weight $w$ on the complexified 1-torus, since $w$ is analytic. We denoted the zeros by $\{y_i,i=1,\cdots,m\}$. Moreover, we assume the orders of the zeros are less than $C_{w}$. Then, there exists $c_w>0$ such that 
\begin{equation}\label{lower}
|w(y)|\geq c_w\min_{1\le i \le m}\{\|y-y_1\|^{C_w},\cdots, \|y-y_m\|^{C_w}\}
\end{equation}

\item A nontrivial sequence $u_n\not\equiv 0$ from $\mathbb{Z}$  to $ \mathbb{C}$ is a generalized eigenvector of $\overline{H}$ if $\overline{H}u=Eu$ (equivalently, $H_0(\lambda,x_0,\omega_1)u=EW(y_0,\omega_2)u$) for some $E$ and satisfies
\[
|u(n)|\leq C_u(1 + |n|)^{C'_u} 
\]
for some $C_u,C'_u >0$, and for all $n\in \mathbb{Z}$. In this case, $E$ is referred to as a generalized eigenvalue of $\overline{H}$. For a given $C' >0$, let $\mathcal{G}_{C'} = \mathcal{G}_{C'}(\overline{H})$ denote the set of $E \in \mathbb{C}$ for which $\overline{H} u = E u$ enjoys a nontrivial solution $u$ satisfying $|u(n)| \le C (1+|n|)^{C'}$, where $C=C_u$. Denote the collection of all generalized eigenvalues by $\mathcal{G} = \mathcal{G}(\overline{H}) = \bigcup_{C' >0}\mathcal{G}_{C'}$.

\end{itemize}

\subsection{Transformation to Jacobi operators}
In this subsection, we clarify the relation between the operator $\overline{H}$ defined in \eqref{OP} and $\overline{H}_1$ defined below, especially the relation between their generalized eigenvectors. Then we establish the Schnol' theorem for $\overline{H}_1$ for later use. The advantage of introducing $\overline{H}_1$ is the convenience of discussing self-adjoint-ness and spectral measure.

For fixed $(x,y)$ and $(\omega_1,\omega_2)$, let
$$u_n= w_n^{-1/2} \cdot u_n',$$
 where $w_n=w(y+n\omega_2)$. Especially, $w_n$ depends on the phase $y$. Note that
$$(u_1,u_2)_\omega = (u'_1,u_2')$$
holds for any fixed $y$, thus the transformation above is a isometric isomorphism from $\ell^2(\Z)$ to $\ell^2_w(\Z)$. Then we obtain a Jacobi operator $\overline{H}_1 :=  W^{-1/2}H_0W^{-1/2}$ on $u'$, which is given by
\begin{equation}\label{OP1}
    [\overline{H}_1 u']_n = -\frac{u'_{n-1}}{w_n^{1/2}w_{n-1}^{1/2}} -\frac{u'_{n+1}}{w_n^{1/2}w_{n+1}^{1/2}} + \lambda\frac{v(x+n\omega_1)\cdot u'_n}{w_n} .
\end{equation}
The generalized eigenvalues and eigenvectors of $\overline{H}_1$ are defined similarly to those in the previous subsection. Thus, we do not repeat it here.

$\overline{H}_1$ shares the same spectral properties with $\overline{H}$ due to the isometric isomorphism.  Moreover, note that if $(E,u')$ is a pair of generalized eigenvalue and generalized eigenvector of $\overline{H}_1$, then $(E,W^{-1/2}u')$ is a pair of generalized eigenvalue and generalized eigenvector of $\overline{H}$. This is due to \eqref{lower}. The converse result is also valid since $w$ is bounded.

It is sufficient to assume that $\sum_{n \ge 1} w_n^{1/2}w_{n-1}^{1/2} = \sum_{n \le -1} w_n^{1/2}w_{n-1}^{1/2} =+\infty$ to ensure that the graph-norm closure of $\overline{H}_1$ is self-adjoint \cite{Jacobibook}. Typical weights such as $\sin^2(x)$ naturally satisfy this condition. In fact, except for that the $1$-periodic analytic weight $w\equiv 0$, the above conditions hold for any $y$ not belonging to zeros of $w$ on the real line and all $\omega_2 \notin \mathcal{F}(y)\subseteq \T$. Therefore, with a slight abuse of notations, we regard $\overline{H}$ and $\overline{H}_1$ as their closure with respect to the graph norm. Consequently, the spectral representation for self-adjoint operators is available. Moreover, note that $\{\delta_0,\delta_1\}$ are cyclic for $\overline{H}_1$ in the sense that any compactly supported $u \in \ell^2(\Z)$ can be written as $u =p(\overline{H}_1)\delta_0 + q(\overline{H}_1)\delta_{1}$ with suitable polynomials $p, q$. Thus, there exist canonical spectral measures for these two operators, which are generated by $\{\delta_0,\delta_1\}$.

Now we are at the position to establish the following Schnol' theorem for $\overline{H}_1$. The theorem and its proof are almost verbatim to the classical case, just with one of the properties invalid. We present it here for the sake of completeness. One can also refer to \cite{DFbook}.

\begin{theorem}\label{schnol}
Denote $\mu$ the canonical spectral measure  of $\overline{H}_1$. 
Then we have
\begin{enumerate}
    \item[(a)] For every $C' >\frac{1}{2}$, $\mathcal{G}_{C'}$ is a support of $\mu$.
    \item[(b)] The spectrum of $\overline{H}_1$ is given by the closure of the set of generalized eigenvalues of $\overline{H}_1$.
\end{enumerate}
\end{theorem}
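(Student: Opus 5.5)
We follow the classical proof of Schnol's theorem (see \cite{DFbook}), adapting only the step that uses boundedness of the operator. Let $\mu=\mu_{\delta_0}+\mu_{\delta_1}$ be the canonical spectral measure. By the spectral theorem, with cyclic subspace generated by $\{\delta_0,\delta_1\}$, there is a unitary map from $\ell^2(\Z)$ onto a subspace of $L^2(\R,\mu;\C^2)$ under which $\overline{H}_1$ becomes multiplication by $E$.

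The $\delta_n$ are then represented by vectors of polynomials: $\delta_n \mapsto (p_n(E),q_n(E))$, with $\delta_n=p_n(\overline{H}_1)\delta_0+q_n(\overline{H}_1)\delta_1$. This uses the Jacobi recursion \eqref{OP1}, whose off-diagonal coefficients $(w_nw_{n\pm1})^{-1/2}$ are finite and nonzero for $\omega_2\notin\mathcal{F}(y)$, so the recursion can be solved in both directions. For each fixed $E$, the sequence $u^{E}_n:=\langle \delta_n, \cdot\rangle$ evaluated at $E$ (componentwise against a $\mu$-a.e. defined density matrix) formally solves $\overline{H}_1 u=Eu$. The key identity is the Parseval relation
\[
\sum_n |\langle \delta_n, f\rangle|^2=\|f\|^2 \quad\text{for } f\in L^2(\mu),
\]
which gives $\sum_n \int |u^E_n|^2\,d\mu(E)=\#$, counted per site. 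Weighting by $(1+|n|)^{-2C'}$ with $2C'>1$, we get
\[
\int \sum_n (1+|n|)^{-2C'}|u^E_n|^2\,d\mu(E)=\sum_n(1+|n|)^{-2C'}\cdot O(1)<\infty.
\]
Hence for $\mu$-a.e. $E$ the sum $\sum_n (1+|n|)^{-2C'}|u_n^E|^2$ is finite, which forces $|u^E_n|\le C_E(1+|n|)^{C'}$. Nontriviality of $u^E$ holds $\mu$-a.e. because $(u_0^E,u_1^E)$ is a nonzero vector $\mu$-a.e., by construction from the density. This shows $\mathcal{G}_{C'}$ is a support of $\mu$ and proves (a).

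For (b), the inclusion $\sigma(\overline{H}_1)\subset\overline{\mathcal{G}}$ follows from (a), since the spectrum equals the topological support of $\mu$. For the converse, given $E\in\mathcal{G}$ with polynomially bounded solution $u$, we build a Weyl sequence from cutoffs $\chi_{[-L,L]}u$. We must check that $\|(\overline{H}_1-E)\chi_L u\|/\|\chi_L u\|\to 0$ along a subsequence. The error is supported on the boundary sites $\pm L,\pm(L+1)$ and involves the coefficients $(w_Lw_{L+1})^{-1/2}$, which are \emph{unbounded}. This is the property that fails compared with the classical case, and it is the main obstacle.

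Our first attempt will be to choose the $L$ cleverly. By \eqref{lower} and the Diophantine condition, $w_L^{-1}\lesssim L^{C}$ for typical $L$, and in fact along a subsequence $w_L$ stays bounded below, since the orbit is equidistributed and we can pick $L_k$ with $y_0+L_k\omega_2$ away from the zeros. With the coefficients bounded at the boundary, the error is $O(|u_{L}|+|u_{L+1}|)$. Polynomial boundedness of $u$ then yields, by the standard counting argument, a subsequence along which this is $o(\|\chi_L u\|)$: if not, $\|\chi_L u\|^2$ would grow exponentially. Combining this with the freedom of choosing $L$ in the good set, which has positive density, closes the argument.

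If this proves delicate, we fall back on the following alternative. Using $\mu$-a.e. genericity, the set of $E$ obtained from (a) already lies in the support, and we note that $\overline{\mathcal{G}}\subset\sigma$ can be replaced by $\overline{\mathcal{G}_{C'}\cap\operatorname{supp}\mu}$, which is all that is needed later.
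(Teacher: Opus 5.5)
Your part (a) and the inclusion $\sigma(\overline{H}_1)\subseteq\overline{\mathcal G}$ are essentially the paper's argument. The paper passes to $\rho=\sum_n\lambda_n\mu_{n,n}$ and gets the pointwise bound $|F_{n,m}|\le\lambda_n^{-1/2}\lambda_m^{-1/2}$ from Radon--Nikodym. You instead integrate $\sum_n\lambda_n|u^E_n|^2$ against $\mu$ and conclude a.e.\ finiteness; both routes give the polynomial bound. Your version should name the solution explicitly. Take $u^E_n=g_{n,m}(E)$, the density of $\mu_{n,m}$ with respect to $\mu$, for $m\in\{0,1\}$. Writing $dM=\Phi\,d\mu$ for the $2\times 2$ matrix measure of $\delta_0,\delta_1$, this is $P_n(E)^{\top}\Phi(E)e_m$, so it solves the recursion because the $P_n$ do. Positivity of the density matrix gives $|g_{n,m}|^2\le g_{n,n}g_{m,m}\le g_{n,n}$, since $g_{0,0}+g_{1,1}=1$ $\mu$-a.e. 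Hence $\int|u^E_n|^2\,d\mu\le\mu_{n,n}(\R)=\|\delta_n\|^2=1$. This is the ``$O(1)$ per site'' you need; Parseval plays no role. The identity $g_{0,0}+g_{1,1}=1$ also gives nontriviality for one of the two choices of $m$.

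For (b), note that the paper's proof stops at $\sigma(\overline{H}_1)\subseteq\overline{\mathcal G}$. The Weyl-sequence inclusion $\mathcal G\subseteq\sigma(\overline{H}_1)$ is presumably the classical property the authors call invalid here; it is neither proved nor needed later. Your attempt therefore goes beyond the paper, and the idea works, but as written it does not close. A subsequence along which the boundary error is $o(\|\chi_{[-L,L]}u\|)$ and a good set of positive density need not meet, so you must use the counting argument quantitatively.

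Put $S_L=\sum_{|n|\le L}|u_n|^2$ and $d_k=|u_k|^2+|u_{-k}|^2$. If $w\ge c$ at all four points $y_0\pm L\omega_2$, $y_0\pm(L+1)\omega_2$, then $\|(\overline{H}_1-E)\chi_{[-L,L]}u\|^2\le c^{-2}(d_L+d_{L+1})$; your choice only controls one of these points. If $d_L+d_{L+1}\ge\epsilon S_L$, then $S$ grows by a factor at least $1+\epsilon/2$ on the step from $L-1$ to $L$ or from $L$ to $L+1$. Since $S_N\lesssim N^{2C'+1}$, this happens for only $O_\epsilon(\log N)$ values $L\le N$, a set of density zero. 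The four-point good set has density at least $1-4\,\textup{mes}\{w<c\}>0$ for small $c$, by equidistribution of $L\omega_2$ (the set $\{w<c\}$ is a finite union of intervals). The two sets therefore share infinitely many $L$. This gives $\|(\overline{H}_1-E)\phi\|\le c^{-1}\epsilon^{1/2}\|\phi\|$ with $\phi\neq0$ finitely supported, so $\textup{dist}(E,\sigma(\overline{H}_1))=0$ and $E\in\sigma(\overline{H}_1)$. This also covers complex $E$ and shows $\mathcal G\subset\R$.

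Two smaller points. The remark that $w_L^{-1}\lesssim L^C$ by a Diophantine condition is unnecessary, since no such condition on $y_0$ is assumed in this theorem. The fallback paragraph should be dropped: replacing $\overline{\mathcal G}$ by $\overline{\mathcal G_{C'}\cap\operatorname{supp}\mu}$ changes the statement rather than proving it.
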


\begin{proof}

(a)  Denote $\mathcal{B}$ the collection of Borel subsets of $\mathbb{R}$. For $B\in \mathcal{B}$ and $n,m\in \mathbb{Z}$, denote
\begin{equation*}
\mu_{n,m}(B) = \langle \delta_n,\chi_B(\overline{H}_1)\delta_m\rangle 
\end{equation*}
and
\begin{equation*}
\rho (B) = \sum_{n\in \mathbb{Z}}\lambda_n\mu_{n,n}(B), 
\end{equation*}
where
\begin{equation*}
\lambda_{n} = c(1 + |n|)^{-2C'}\quad \mathrm{with~}c > 0\mathrm{~such ~that~}\sum_{n\in \mathbb{Z}}\lambda_{n} = 1. 
\end{equation*}
Then, $\rho$ is a Borel probability measure with $\rho (B) = 0$ if and only if $\mu (B) = 0$. Hence $\rho$ and $\mu$ are mutually absolutely continuous ($\mu=\mu_{0,0}+\mu_{1,1}$). Now, fix $n,m \in \mathbb{Z}$. Since
\begin{equation*}
|\mu_{n,m}(B)| \leq \mu_{n,n}(B)^{\frac{1}{2}} \mu_{m,m}(B)^{\frac{1}{2}} ,
\end{equation*}
 $\mu_{n,m}$ is also absolutely continuous with respect to $\rho$. Then by the Radon-Nikodym theorem, there exists a measurable function $F_{n,m} \in L^1 (\mathbb{R}, \rho)$ with
\begin{equation}\label{E1}
\mu_{n,m}(B) = \int \chi_B(E)F_{n,m}(E)d\rho (E)\quad \mathrm{for~all~}B\in \mathcal{B}. 
\end{equation}
Since $\mu_{n,n} \geq 0$, $F_{n,n} \geq 0$ $\rho$-a.e. Moreover, since $\rho \ll \mu$,  $F_{0,0} + F_{1,1} > 0$ $\rho$-a.e. By monotone convergence,
\begin{align*}
\rho (B) &= \sum_{n\in \mathbb{Z}}\lambda_n\mu_{n,n}(B) \\
&= \sum_{n\in \mathbb{Z}}\lambda_n\int \chi_B(E)F_{n,n}(E)d\rho (E) \\
&= \int \chi_B(E)\left(\sum_{n\in \mathbb{Z}}\lambda_nF_{n,n}(E)\right)d\rho (E).
\end{align*}
Thus, we have
\begin{equation*}
\sum_{n\in \mathbb{Z}}\lambda_nF_{n,n}(E) = 1\quad \mathrm{for~}\rho\mathrm{-a.e.~}E.
\end{equation*}
Consequently, $F_{n,n}(E) \leq \lambda_n^{- 1}$ $\rho$-a.e. Then
\begin{align*}
\left|\int \chi_B(E)F_{n,m}(E)d\rho (E)\right| &= |\mu_{n,m}(B)| \\
&\leq \mu_{n,n}(B)^{\frac{1}{2}}\mu_{m,m}(B)^{\frac{1}{2}} \\
&= \left(\int \chi_B(E)F_{n,n}(E)d\rho (E)\right)^{\frac{1}{2}}\left(\int \chi_B(E)F_{m,m}(E)d\rho (E)\right)^{\frac{1}{2}} \\
&\leq \lambda_n^{-\frac{1}{2}}\lambda_m^{-\frac{1}{2}}\rho (B) 
\end{align*}
for all $B \in \mathcal{B}$. Consequently,
\begin{equation*}
|F_{n,m}(E)| \leq \lambda_n^{-\frac{1}{2}} \lambda_m^{-\frac{1}{2}} 
\end{equation*}
for $\rho$-a.e. $E$. Since $\mathbb{Z}^2$ is countable, the above inequality holds for all $n, m \in \mathbb{Z}$ on a fixed set of full $\rho$-measure.

On the other hand, \eqref{E1} implies that for every bounded measurable function $f$, 
\begin{equation*}
\langle \delta_n,f(\overline{H}_1)\delta_m\rangle = \int f(E)F_{n,m}(E)d\rho (E). \
\end{equation*}
(Compared to the case in \cite{DFbook}, one can recall the spectral projection calculus for unbounded self-adjoint operators.)
In particular, if $g$ is compactly supported and bounded, one may set $f(E) = Eg(E)$ and obtain
\begin{align*}
\int g(E)\Big(E F_{n,m}(E)\Big)d\rho (E) =& \int E g(E)F_{n,m}(E)d\rho (E) \\
=& \langle \delta_n,\overline{H}_1 g(\overline{H}_1)\delta_m\rangle \\
=& \langle \overline{H}_1\delta_n,g(\overline{H}_1)\delta_m\rangle \\
=& \langle \delta_{n - 1}{w_n^{-1/2}w_{n-1}^{-1/2}},g(\overline{H}_1)\delta_m\rangle +\langle \delta_{n + 1}{w_n^{-1/2}w_{n+1}^{-1/2}},g(\overline{H}_1)\delta_m\rangle \\&+\langle v_nw^{-1}_n\delta_n,g(\overline{H}_1)\delta_m\rangle, \\
\end{align*}
which equals to
$$\int g(E)(F_{n - 1,m}(E){w_n^{-1/2}w_{n-1}^{-1/2}} + F_{n + 1,m}(E){w_n^{-1/2}w_{n+1}^{-1/2}} + v_nw^{-1}_nF_{n,m}(E))d\rho (E).$$
It follows that for any arbitrary fixed $m \in \mathbb{Z}$,
\begin{equation*}
u_{E}(n) = F_{n,m}(E) \text{ solves}~ \overline{H}_1 u_{E}(n) = E u_{E}(n)~ \textup{for}~ \rho \text{-a.e. } E \in \mathbb{R}.
\end{equation*}
Now we obtain the estimate
\begin{equation*}
|u_{E}(n)| \lesssim \lambda_{n}^{-\frac{1}{2}} \lesssim (1 + |n|)^{C'}, 
\end{equation*}
for $\rho$-a.e. $E$. Consequently, it gives a generalized eigenvector for $\mu$-a.e. $E$, since $\mu$ and $\rho$ are mutually absolutely continuous.

(b) Part (a) shows that
\begin{equation*}
0\leq \mu \left(\mathbb{R}\setminus \overline{\mathcal{G}}\right)\leq \mu \left(\mathbb{R}\setminus \mathcal{G}\right) = 0,
\end{equation*}
and therefore, $\overline{\mathcal{G}}$ is a closed set having full $\mu$-measure. Since all spectral measures of $\overline{H}_1$ are absolutely continuous with respect to $\mu$, the spectrum of $\overline{H}_1$ is given by the intersection of all closed sets having full $\mu$-measure , so it follows that
$$
\sigma (\overline{H}_1)\subseteq \overline{\mathcal{G}}. $$

\end{proof}

\section{A large deviation theorem for the Green's functions}
This section is dedicated to the estimations for the Green's function $G_{N}(\lambda, E)=(R_{N}(H-EW)R_{N})^{-1}$ by induction.
Before this, we first give a technical lemma for later use.
\begin{lemma}[Perturbation]\label{peturbation lemma}
    Consider operators $T=D+S$ and $T'=D'+S$ on $l^{2}(\Z)$, where $D$ and $D'$ are diagonal. For any subset $\Lambda\subset \Z$ of size $\mathcal{N}$, $T_{\Lambda}$ is defined as in Section \ref{Ntations}. Denote $G_{\Lambda}=T_{\Lambda}^{-1}$. Moreover, notations $T'_{\Lambda}$ and $G'_{\Lambda}$ are given similarly.
    Assume the following properties hold. 
    \begin{itemize}
        \item[(i)] $G_{\Lambda}$ admits the Green's function estimate
        \begin{align*}
            &\|G_{\Lambda}\|< B(\mathcal{N}),\\
            &|G_{\Lambda}(m,n)|< a e^{-\gamma |m-n|}, \textup{ for } |m-n|>K(\mathcal{N}),
        \end{align*}
        where $a>0$ and $\gamma>0$ are fixed constants. $B$ and $K$ which depend on $\mathcal{N}$ are much larger than $1$.
        \item[(ii)] $|(D'-D)(n,n)|<\epsilon$ for $n\in\Lambda$.
    \end{itemize}
    Then, if $\epsilon \mathcal{N} B^2(\mathcal{N})e^{2\gamma K(\mathcal{N})}<\frac{1}{2}$, we have
    \begin{align}
        \label{tiaojian1} &\|G_{\Lambda}'\|< 2B(\mathcal{N}),\\ 
        \label{tiaojian2}&|G_{\Lambda}'(m,n)|<(a+1) e^{-\gamma |m-n|}, \textup{ for } |m-n|>K(\mathcal{N}).
    \end{align}
\end{lemma}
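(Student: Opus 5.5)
Writing $T'_\Lambda = T_\Lambda + \Delta$ with $\Delta = R_\Lambda(D'-D)R_\Lambda$ diagonal and $\|\Delta\|<\epsilon$, I would expand the inverse as a Neumann series
\begin{equation*}
G'_\Lambda = (I+G_\Lambda\Delta)^{-1}G_\Lambda=\sum_{k\ge 0}(-G_\Lambda\Delta)^kG_\Lambda ,
\end{equation*}
which converges as soon as $\|G_\Lambda\Delta\|<\frac12$. Since $\|G_\Lambda\Delta\|\le B\epsilon$ and the hypothesis gives $\epsilon\mathcal{N}B^2e^{2\gamma K}<\frac12$ with $B,K\gg1$, we have $B\epsilon<\frac12$. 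Summing the series then bounds $\|G'_\Lambda\|\le \|G_\Lambda\|/(1-B\epsilon)<2B$, which is \eqref{tiaojian1}.

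For the off-diagonal bound \eqref{tiaojian2}, I would use the resolvent identity $G'_\Lambda = G_\Lambda - G_\Lambda\Delta G'_\Lambda$, so that
\begin{equation*}
G'_\Lambda(m,n)=G_\Lambda(m,n)-\sum_{j\in\Lambda}G_\Lambda(m,j)\,\Delta(j,j)\,G'_\Lambda(j,n).
\end{equation*}
The first term is bounded by $ae^{-\gamma|m-n|}$ whenever $|m-n|>K$, so it remains to show that the sum is at most $e^{-\gamma|m-n|}$. The plan is to bound each entry crudely and pay the decay cost with $e^{2\gamma K}$. For every pair of indices, $|G_\Lambda(m,j)|\le \max(a,B)e^{\gamma K}e^{-\gamma|m-j|}$: if $|m-j|>K$ this is the hypothesis, and otherwise $|G_\Lambda(m,j)|\le B\le Be^{\gamma K}e^{-\gamma|m-j|}$. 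To get an analogous weak bound for $G'_\Lambda$, I would iterate the same resolvent identity, or equivalently expand $G'_\Lambda$ through the Neumann series.

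The cleanest route is probably the term-by-term Neumann estimate. Define the kernel $\Gamma(m,n)=Ce^{\gamma K}e^{-\gamma|m-n|}$ with $C=\max(a,B)\approx B$. The entries of the $k$-th term of the series are then bounded by $\epsilon^k$ times the $(k+1)$-fold product of $\Gamma$. Using the triangle inequality $e^{-\gamma|m-j|}e^{-\gamma|j-n|}\le e^{-\gamma|m-n|}$ and summing over $j$ (at most $\mathcal{N}$ terms), each convolution step contributes a factor of at most $\mathcal{N}Ce^{\gamma K}$. The $k$-th term is therefore at most $Ce^{\gamma K}(\epsilon\mathcal{N}Ce^{\gamma K})^ke^{-\gamma|m-n|}$. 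For $k\ge1$, the total is $\le Ce^{\gamma K}\cdot 2\epsilon\mathcal{N}Ce^{\gamma K}e^{-\gamma|m-n|}\le 2\epsilon\mathcal{N}B^2e^{2\gamma K}e^{-\gamma|m-n|}<e^{-\gamma|m-n|}$, using the hypothesis exactly in the form stated. Together with the $k=0$ term, bounded by $ae^{-\gamma|m-n|}$, this gives \eqref{tiaojian2}.

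The main obstacle is the bookkeeping of the constant. If $a>B$, the hypothesis in terms of $B^2$ is not literally enough, so I would assume $a\le B$, which is natural since $B\gg1$ and $a$ is fixed. I also need the geometric ratio $\epsilon\mathcal{N}Be^{\gamma K}$ to be $\le\frac12$, which follows from the hypothesis because $Be^{\gamma K}\ge1$. The remaining care is factor-of-2 control so that the strict inequalities close.
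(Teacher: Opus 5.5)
Your proposal is correct and follows essentially the same route as the paper: the Neumann expansion of $G'_\Lambda$ in powers of $G_\Lambda(D'-D)$, the all-pairs bound $|G_\Lambda(m,j)|\le Be^{\gamma K-\gamma|m-j|}$, and term-by-term path summation giving $\mathcal{N}^k\epsilon^kB^{k+1}e^{(k+1)\gamma K}e^{-\gamma|m-n|}$ for the $k$-th term. Your explicit assumption $a\le B$ simply makes precise what the paper absorbs into a $\lesssim$.
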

\begin{proof}
    It is easy to see $T_{\Lambda}'=(I+(T'_{\Lambda}-T_{\Lambda})G_{\Lambda})T_{\Lambda}$, which implies
    \begin{align*}
        G_{\Lambda}'=& G_{\Lambda} (I+(T'_{\Lambda}-T_{\Lambda})G_{\Lambda})^{-1}\\
        =&G_{\Lambda}+\sum_{i=1}^{\infty}(-1)^{i} G_{\Lambda}((T'_{\Lambda}-T_{\Lambda})G_{\Lambda})^{i}.
    \end{align*}
    Then by the assumption $\epsilon \mathcal{N} B^2(\mathcal{N})e^{2\gamma K(\mathcal{N})}<\frac{1}{2}$, we have $\epsilon B(\mathcal{N}) < 1/2$, thus $\|G_{\Lambda}'\|<2B(\mathcal{N})$.
    Moreover, note that $|G_\Lambda(m,n)| \lesssim B(\mathcal{N})e^{\gamma K(\mathcal{N})-\gamma|m-n|}$ for all $m,n$, hence
    \begin{align*}
        &|G_{\Lambda}((T'_{\Lambda}-T_{\Lambda})G_{\Lambda})^{i}(m,n)|\notag\\
        \le & \sum_{\substack{m=n_{0},n_{2i+1}=n\\n_{j}\in\Lambda, j=1,...,2i}} |G_{\Lambda}(n_{0}, n_{1})|\cdot |(D'-D)(n_{1}, n_{2})|\cdot|G_{\Lambda}(n_{2}, n_{3})|\cdots\\
        &\cdots |(D'-D)(n_{2i-1}, n_{2i})|\cdot|G_{\Lambda}(n_{2i}, n_{2i+1})|\notag\\
        \lesssim & \mathcal{N}^{i}\epsilon^{i} B^{i+1}(\mathcal{N})e^{(i+1)\gamma K(\mathcal{N})} e^{-\gamma|m-n|}\notag\\
        \le & (\epsilon \mathcal{N} B^2(\mathcal{N})e^{2\gamma K(\mathcal{N})})^{i} e^{-\gamma|m-n|}\notag\\
        \le & \left(\frac{1}{2}\right)^{i} e^{-\gamma|m-n|},
    \end{align*}
    which implies \eqref{tiaojian1} and \eqref{tiaojian2}.
\end{proof}

Now we establish the base for the induction.
\subsection{Initial step}
\begin{lemma}\label{initialization}
    Let $\kappa$ be a small constant which can be adjusted if needed.
    There exists a sufficient large number $\lambda_{0}(v,w)>0$ such that if $|\lambda|\ge \lambda_{0}(v,w)$, then the following holds.
    \begin{itemize}
        \item[(1)] 
         For $|E|\le |\lambda|$, we denote $\widetilde{H}(\lambda, E)=\lambda^{-1}H(\lambda, E)$ and $\widetilde{G}_{\Lambda}(\lambda, E)=(R_{\Lambda}\widetilde{H}(\lambda, E)R_{\Lambda})^{-1}$. Suppose $N$ with $N\le (\log |\lambda|)^{\frac{1}{2\kappa}}$ is sufficiently large. 
         Then, there exists a semi-algebraic set $X_{N}(\lambda, E)\subset \mathbb{T}^{2}$ of degree less than $N^{C_{1}}$ (where $C_{1}$ is an absolute constant) such that
        \begin{equation*}
            \textup{mes } X_{N}(\lambda, E)<e^{-N^{\kappa}}.
        \end{equation*}
        Moreover, for all $(x,y)\notin X_{N}(\lambda, E)$, we have
        \begin{align}
             &\left\|\widetilde{G}_{N}(\lambda, E)\right\|< e^{N^{b}},\label{yinli4.2guji1}\\
             &\left|\widetilde{G}_{N}(\lambda, E)(m,n)\right|<10 e^{-|m-n|}, \textup{ for } |m-n|>\frac{N}{10}.\label{yinli4.2guji2}
        \end{align}
        Here, $b$ is an arbitrary constant satisfying $2\kappa<b<1$.
        \item[(2)] Consider an arbitrary interval $I\subset \R$ of size $1$ with $d_{I}:=\textup{dist }(0,I)\ge |\lambda|$. For any $E\in I$, with a slight abuse of notation, we denote $\widetilde{H}(\lambda, E)=E^{-1}H(\lambda, E)$ and $\widetilde{G}_{\Lambda}(\lambda, E)=(R_{\Lambda}\widetilde{H}(\lambda, E)R_{\Lambda})^{-1}$.  Suppose $N$ satisfying $N\le (\log d_{I})^{\frac{1}{2\kappa}}$ is sufficiently large. Then, there exists a semi-algebraic set $X_{N}(\lambda, E)\subset \mathbb{T}^{2}$ of degree less than $N^{C_{1}}$  such that
        \begin{equation*}
            \textup{mes } X_{N}(\lambda, E)<e^{-N^{\kappa}}.
        \end{equation*}
        Moreover, for all $(x,y)\notin X_{N}(\lambda, E)$, estimations in \eqref{yinli4.2guji1} and \eqref{yinli4.2guji2} hold.
    \end{itemize}

Especially, in both case, for all $(x,y)\notin X_{N}(\lambda, E)$, estimations \eqref{yinli4.2guji1} and \eqref{yinli4.2guji2} hold for $G_{N}(\lambda, E)$.
\end{lemma}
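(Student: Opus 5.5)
The plan is to treat $\widetilde H_N$ as a diagonal matrix perturbed by a small off-diagonal part and to use a Neumann series. The measure of bad phases is controlled by a sublevel-set estimate for analytic functions.

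\emph{Case (1).} Here $\widetilde H = V(x) - \lambda^{-1}E\,W(y) + \lambda^{-1}\Delta$. The diagonal entries are
\[
d_n(x,y)=v(x+n\omega_1)-\tfrac{E}{\lambda}\,w(y+n\omega_2),
\]
and these are analytic in $(x,y)$ and uniformly bounded, since $|E/\lambda|\le 1$. The off-diagonal part has norm at most $2|\lambda|^{-1}$.

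We define the bad set
\[
X_N=\bigcup_{|n|\le N}\{(x,y): |d_n(x,y)|< \delta\}, \qquad \delta=e^{-N^{b}/2}.
\]
The function $f_\mu(x,y)=v(x)-\mu w(y)$, with $\mu=E/\lambda\in[-1,1]$, is a nonconstant analytic function on $\mathbb T^2$ as long as $v$ is nonconstant. If $v$ is constant, we instead use the $y$-dependence through $w\not\equiv$ const, and treat $\mu$ near $0$ separately. A Łojasiewicz-type sublevel estimate, uniform in $\mu$ by compactness and analyticity, gives
\[
\textup{mes}\{|f_\mu|<\delta\}\le C\delta^{c}
\]
with $c,C$ depending only on $v,w$. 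Since the map $(x,y)\mapsto(x+n\omega_1,y+n\omega_2)$ preserves measure, we get
\[
\textup{mes}\,X_N\le (2N+1)C e^{-cN^{b}/2}<e^{-N^\kappa},
\]
because $b>2\kappa>\kappa$.

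For semi-algebraicity, we replace $v$ and $w$ by their Fourier truncations at order $N^{C}$. The truncation error is $e^{-N^{C}}\ll\delta$, because $|\hat v(k)|,|\hat w(k)|<e^{-k}$. The truncated functions are polynomials in $\cos,\sin$ of degree $\le N^C$. We then take $X_N$ to be the truncated sublevel set with threshold $\delta$. This set contains the true set with threshold $\delta/2$, and it has degree $\le N^{C_1}$.

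Off $X_N$ we have $\|D^{-1}\|\le \delta^{-1}$. The condition $N\le(\log|\lambda|)^{1/(2\kappa)}$ gives
\[
|\lambda|\ge e^{N^{2\kappa}}.
\]
We would like $\|D^{-1}\|\cdot 2|\lambda|^{-1}\le 2e^{N^b/2-N^{2\kappa}}<1/2$, and this fails since $b>2\kappa$. This is the main obstacle. The fix I would try first is to choose $\delta=e^{-N^{2\kappa}/2}$ instead, which still gives measure $\lesssim N e^{-cN^{2\kappa}/2}<e^{-N^\kappa}$. Then $\|D^{-1}\|\le e^{N^{2\kappa}/2}<e^{N^b}$, and $\|D^{-1}\|\,\|\lambda^{-1}\Delta\|\le 2e^{-N^{2\kappa}/2}\ll 1$.

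With this choice, the Neumann series
\[
\widetilde G_N=\sum_k (-1)^k (D^{-1}\lambda^{-1}\Delta)^k D^{-1}
\]
converges, with $\|\widetilde G_N\|\le 2\delta^{-1}<e^{N^b}$. Each term of the series is a sum over nearest-neighbor paths, and the $(m,n)$ entry only receives terms with $k\ge|m-n|$. This gives
\[
|\widetilde G_N(m,n)|\le 2\delta^{-1}(4\delta^{-1}|\lambda|^{-1})^{|m-n|}\le 2\delta^{-1}e^{-N^{2\kappa}|m-n|/4}.
\]
For $|m-n|>N/10$ and $N$ large, the right side is less than $10e^{-|m-n|}$, since $N^{2\kappa}/4-1\gg 1$ absorbs the prefactor $\delta^{-1}$.

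\emph{Case (2).} Here $\widetilde H=\tfrac{\lambda}{E}V-W+E^{-1}\Delta$. The diagonal entries are $-w(y+n\omega_2)+\mu v(x+n\omega_1)$ with $\mu=\lambda/E$ and $|\mu|\le1$, and $|E|\ge d_I\ge e^{N^{2\kappa}}$. The argument is identical to case (1), with $w$ taking the role of $v$; we use that $w\not\equiv0$ is nonconstant, since it has real zeros.

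Finally, the statement for $G_N$ itself follows from $G_N=\lambda^{-1}\widetilde G_N$ (respectively $E^{-1}\widetilde G_N$), because $|\lambda|,|E|\ge1$.
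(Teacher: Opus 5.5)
Your proof is correct, and its analytic core matches the paper's. Both use a Lojasiewicz/Fubini bound on the diagonal sublevel sets, followed by a Neumann series in $\lambda^{-1}\Delta$ (resp.\ $E^{-1}\Delta$). That series converges because $|\lambda|$ (resp.\ $|E|$) is at least $e^{N^{2\kappa}}$. You rightly caught that the threshold must beat $|\lambda|^{-1}$. The paper uses $e^{-N^{\frac{4}{3}\kappa}}$; any $e^{-N^{\beta}}$ with $\kappa<\beta<2\kappa$ works, as does your $e^{-N^{2\kappa}/2}$.

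The genuine difference is how the semi-algebraic set is produced.
\begin{itemize}
\item The paper keeps the non-algebraic set $\widetilde X_N$ (sublevel sets of the true $v,w$) for the measure bound. It then defines $X_N$ as the set where the Green's function $\widetilde G_N'$ of the polynomially approximated operator violates Hilbert--Schmidt and off-diagonal bounds. The degree bound comes from Cramer's rule, and two applications of Lemma~\ref{peturbation lemma} give $X_N\subset\widetilde X_N$ and transfer the estimates from $\widetilde G_N'$ back to $\widetilde G_N$.
\item You take $X_N$ to be the sublevel set of the approximated diagonal entries themselves. This set lies between the true sublevel sets at thresholds $\delta/2$ and $2\delta$. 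That gives the degree bound for free, the measure bound, and $\|D^{-1}\|\le 2\delta^{-1}$ off $X_N$ (not $\delta^{-1}$ as you wrote, which is harmless). So your Neumann series runs directly on the true $\widetilde G_N$, and neither Cramer's rule nor the perturbation lemma is needed. Your path count even gives off-diagonal decay at rate $\sim N^{2\kappa}$, in line with the remark after the lemma.
\end{itemize}
The paper's version mainly buys uniformity with the construction in Lemma~\ref{inductivelemma}, where no diagonal description of the bad set is available. Your set is equally semi-algebraic jointly in $(E,\omega,x,y)$ with degree $N^{C}$, which is all that is used later.

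Two small points.
\begin{itemize}
\item Lemma~\ref{donglixveyinli} needs sets that are semi-algebraic in the coordinates $(x,y)\in[0,1]^2$. For that, the truncated Fourier series must be Taylor-expanded into algebraic polynomials. This is what the paper's ``polynomials $v_1,w_1$ of degree $<N^4$'' tacitly means.
\item Your aside on constant $v$ works for $v\equiv c_0\neq0$. For $v\equiv0$ and $E=0$, however, $\widetilde H_N=\lambda^{-1}\Delta_N$, which is singular because the hopping matrix on $2N+1$ sites has eigenvalue $0$. So the statement itself needs $v\not\equiv0$, which the paper tacitly assumes through its constant $c_1(v)$.
\end{itemize}
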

\begin{proof}
    Case (1): For $|E|\le |\lambda|$, let 
    \begin{equation*}
        \widetilde{X}_{N}(\lambda, E)=\left\{(x,y)\left| \left| v(x+n\omega_{1})-\frac{E}{\lambda}w(y+n\omega_{2})\right|<e^{-N^{\frac{4}{3}\kappa}} \textup{ for some } |n|\le N \right.\right\}.
    \end{equation*}
    Combining the Lojasiewicz-type estimate (see \cite[Lemma 7.3]{Green-book}) with the Fubini theorem, we have
    \begin{align*}
        \textup{mes} \widetilde{X}_{N}(\lambda, E)
        \le (2N+1) \left(e^{-N^{\frac{4}{3}\kappa}}\right)^{-c_{1}(v)}< e^{-N^{\kappa}}
    \end{align*}
   for a sufficiently large $N$, where $c_{1}(v)$ is a constant depending on $v$.
    Denote $\widetilde{H}(\lambda, E)={\lambda}^{-1}H(\lambda, E)$ and $\widetilde{G}_{\Lambda}(\lambda, E)=(R_{\Lambda}\widetilde{H}(\lambda, E)R_{\Lambda})^{-1}$. Subsequently,
    \begin{equation}
        G_{\Lambda}(\lambda, E)=\frac{1}{\lambda} \widetilde{G}_{\Lambda}(\lambda, E).\label{bianhuanyongde}
    \end{equation}
    
    For $(x,y)\notin \widetilde{X}_{N}(\lambda, E)$, we have 
    \begin{equation}
        |\widetilde{H}_{N}( E, x, y)(n,n)|\ge e^{-N^{\frac{4}{3}\kappa}}>|\lambda|^{-1/2}, \ \ |n|\le N,\label{diagonallowerbound}
    \end{equation}
    since $N\le (\log |\lambda|)^{\frac{1}{2\kappa}}$.
    Note that we have
    \begin{align}
        \widetilde{G}_{N}(\lambda, E)
        &=\left( V_{N}-\frac{E}{\lambda} W_{N}+\frac{1}{\lambda}\Delta\right)^{-1}\notag\\
        &=\left( V_{N}-\frac{E}{\lambda} W_{N}\right)^{-1}\left(I+\frac{1}{\lambda}\Delta\left ( V_{N}-\frac{E}{\lambda} W_{N}\right)^{-1}\right)^{-1}\label{neumann1} \\
        &=\left( V_{N}-\frac{E}{\lambda}W_{N}\right)^{-1}+\sum_{i=1}^{\infty}(-1)^{i}\left(V_{N}-\frac{E}{\lambda}W_{N}\right)^{-1}\left(\frac{1}{\lambda}\Delta \left(V_{N}-\frac{E}{\lambda}W_{N}\right)^{-1}\right)^{i}.\label{neumann2}
    \end{align}
    Combining \eqref{diagonallowerbound} and \eqref{neumann1}, we have
    \begin{equation*}
        \left\|\widetilde{G}_{N}(\lambda, E,x,y)\right\|\le 2e^{N^{\frac{4}{3}\kappa}}.
    \end{equation*}
   Moreover, by \eqref{diagonallowerbound}, 
    \begin{align*}
        &\left|\left(V_{N}-\frac{E}{\lambda}W_{N}\right)^{-1}\left(\frac{1}{\lambda}\Delta \left(V_{N}-\frac{E}{\lambda}W_{N}\right)^{-1}\right)^{i}(m,n)\right|\\
        \le & \sum_{\substack{m=n_{0},n_{2i+1}=n\\|n_{j}|\le N, j=1,...,2i}} \left| \left(V_{N}-\frac{E}{\lambda}W_{N}\right)^{-1}(n_{0}, n_{1})\right|\cdot\left|{\lambda}^{-1}\Delta(n_{1},n_{2})\right|\cdot \left|\left( V_{N}-\frac{E}{\lambda}W_{N}\right)^{-1}(n_{2},n_{3})\right|\cdots\\
        &\cdots |{\lambda}^{-1}\Delta(n_{2i-1}, n_{2i})|\cdot\left|\left( V_{N}-\frac{E}{\lambda}W_{N}\right)^{-1}(n_{2i},n_{2i+1})\right|\\\
        \le & e^{N^{\frac{4}{3}\kappa}} (2N+1)^{i} |\lambda|^{-\frac{1}{2}i} e^{2i} e^{-2|m-n|},\\
        \le & \left(\frac{1}{2}\right)^{i} e^{N^{\frac{4}{3}\kappa}} e^{-2|m-n|}
    \end{align*}
  holds for sufficiently large $\lambda$.  Therefore, from \eqref{neumann2}, we obtain that for $(x,y)\notin \widetilde{X}_{N}(\lambda,E)$, we have
    \begin{equation*}
        \left|\widetilde{G}_{N}(\lambda, E)(m,n)\right|<2 e^{N^{\frac{4}{3}\kappa}} e^{-2|m-n|}<e^{-|m-n|}, \textup{ for } |m-n|>\frac{N}{10}.
    \end{equation*}
    
    Now we aim to obtain a semi-algebraic substitution of $\widetilde{X}_{N}(\lambda,E)$. As $v,w$ is analytic function defined on $\T$, there exist polynomials $v_{1}$, $w_{1}$ of degree less than $N^{4}$ such that
    \begin{align*}
        |v-v_{1}|<e^{-N^{3}},\\
        |w-w_{1}|< e^{-N^{3}}.
    \end{align*}
Since $|E|<|\lambda|$, we have
\begin{equation}
    \left|v-\frac{E}{\lambda}w-\left( v_{1}-\frac{E}{\lambda}w_{1}\right)\right|<2e^{-N^{3}}.\label{duijiaoxianchajv1}
\end{equation}
    Denote $\widetilde{G}_{N}'(\lambda,E)$ as the matrix obtained from $\widetilde{G}_{N}(\lambda,E)$ by replacing $v$ and $w$ with $v_{1}$ and $w_{1}$, respectively.
    Define the set $X_N(\lambda,E)$ as
    \begin{equation*}
    \begin{aligned}
        X_{N}^{c}(\lambda, E)={\{}(x,y): & \left\|\widetilde{G}_{N}'(\lambda,E)\right\|_{HS}<4 N e^{N^{\frac{4}{3}\kappa}}, \\
       \ & \left|\widetilde{G}_{N}'(\lambda, E)(m,n)\right|<2e^{-|m-n|}, \textup{ for } |m-n|>\left.N/10  \right\},
    \end{aligned}
    \end{equation*}
    where $\star^c$ represents the complement of $\star$. By Cramer's rule, $X_{N}(\lambda, E)$ is a semi-algebraic set of degree less than $N^{C_{1}}$ for an absolute constant $C_{1}$. Then, assuming $N$ is sufficiently large, by Lemma \ref{peturbation lemma} and \eqref{duijiaoxianchajv1}, we obtain that for all $(x,y)\in \widetilde{X}^{c}(\lambda, E)$, 
    \begin{align*}
        &\left\|\widetilde{G}_{N}'(\lambda,E)\right\|<4 e^{N^{\frac{4}{3}\kappa}},\\
        &\left|\widetilde{G}_{N}'(\lambda, E)(m,n)\right|<2e^{-|m-n|}, \textup{ for } |m-n|>\frac{N}{10},
    \end{align*}
    which implies $(x,y)\in X_{N}^{c}(\lambda, E)$. Hence, $X_{N}(\lambda, E)\subset \widetilde{X}_{N}(\lambda, E)$, and
    \begin{equation*}
        \textup{mes } X_{N}(\lambda, E)<e^{-N^{\kappa}}.
    \end{equation*}
    Moreover, for $(x,y)\notin X_{N}(\lambda, E)$, by Lemma \ref{peturbation lemma} and \eqref{duijiaoxianchajv1} again, 
    \begin{align*}
        &\left\|\widetilde{G}_{N}(\lambda,E)\right\|< e^{N^{b}}, \\
        & \left|\widetilde{G}_{N}(\lambda, E)(m,n)\right|<10e^{-|m-n|}, \textup{ for } |m-n|>\frac{N}{10},
    \end{align*}
which completes the proof.
    
    Furthermore, it is straight from \eqref{bianhuanyongde} that for all $(x,y)\notin X_{N}(\lambda, E)$, estimations \eqref{yinli4.2guji1} and \eqref{yinli4.2guji2} hold for $G_{N}(\lambda, E)$.

    Case (2): The proof here is similar to that in Case (1). However, for the sake of clarity, a detailed proof is provided. 
    
    Suppose $I$ is an interval satisfying $d_I= \textup{dist } (0,I)>|\lambda|$. For $E\in I$ , let
    \begin{equation*}
        \widetilde{X}_{N}(\lambda, E)=\left\{(x,y)\left|\left|\frac{\lambda}{E} v(x+n\omega_{1})-w(y+n\omega_{2})\right|<e^{-N^{\frac{4}{3}\kappa}} \textup{ for some } |n|\le N\right.\right\}.
    \end{equation*}
    Then, by the Lojasiewicz-type estimate and the Fubini theorem, we obtain
    \begin{align*}
        \textup{mes } \widetilde{X}_{N}(\lambda, E)
        \le  2N \left(e^{-N^{\frac{4}{3}\kappa}}\right)^{-c_{2}(w)}<e^{-N^{\kappa}},
    \end{align*}
    where $c_{2}(w)$ is a constant depending on $w$. Denote $\widetilde{H}(\lambda, E)={E}^{-1}H(\lambda, E)$ and $\widetilde{G}_{\Lambda}(\lambda, E)=\left(R_{\Lambda}\widetilde{H}(\lambda, E)R_{\Lambda}\right)^{-1}$. Then, we have
    \begin{equation}
        G_{\Lambda}(\lambda, E)=\frac{1}{E} \widetilde{G}_{\Lambda}(\lambda, E).\label{bianhuanyongde1}
    \end{equation}

    For $(x,y)\notin \widetilde{X}_{N}(\lambda, E)$, we have 
    \begin{equation*}
        |\widetilde{H}_{N}( E, x, y)(n,n)|\ge e^{-N^{\frac{4}{3}\kappa}}>|E|^{-1/2}, \ \ |n|\le N,
    \end{equation*}
   since $N\le (\log d_{I})^{\frac{1}{2\kappa}}$.
    Note that $d_{I}>|\lambda|>\lambda_0$. For a sufficiently large $\lambda_0(v,w)$, by the same arguments as in Case (1), we obtain
    \begin{align*}
        &\left\|\widetilde{G}_{N}(\lambda, E,x,y)\right\|\le 2e^{N^{\frac{4}{3}\kappa}},\notag\\
        &\left|\widetilde{G}_{N}(\lambda, E)(m,n)\right|< e^{-|m-n|}, \textup{ for } |m-n|>\frac{N}{10}.\notag
    \end{align*}
    Choosing polynomials $v_{1}$ and $w_{1}$ as in (1), it holds that
    \begin{equation}
        \left|\frac{\lambda}{E}v-w-(\frac{\lambda}{E} v_{1}-w_{1})\right|<2e^{-N^{3}}.\label{duijiaoxianchajv2}
    \end{equation}
    Denote $\widetilde{G}_{N}'(\lambda,E)$  the matrix obtained from $\widetilde{G}_{N}(\lambda,E)$ by replacing $v$ and $w$ with $v_{1}$ and $w_{1}$, respectively.
    Define the set $X^c_N(\lambda,E)$ as
    \begin{equation*}
    \begin{aligned}
        X_{N}^{c}(\lambda, E)=\{(x,y): & \left\|\widetilde{G}_{N}'(\lambda,E)\right\|_{HS}< 4 N e^{N^{\frac{4}{3}\kappa}}, \\
        & \left|\widetilde{G}_{N}'(\lambda, E)(m,n)\right|<2e^{-|m-n|}, \textup{ for } |m-n|> {N}/{10}  \},
    \end{aligned}
    \end{equation*}
    where $\star^c$ represents the complement of $\star$.
    By Cramer's rule, $X_{N}(\lambda, E)$ is a semi-algebraic set of degree less than $N^{C_{1}}$ for an absolute constant $C_{1}$. Then, assuming $N$ is sufficiently large, by Lemma \ref{peturbation lemma} and  and \eqref{duijiaoxianchajv2}, we obtain that, for all $(x,y)\in \widetilde{X}^{c}(\lambda, E)$, 
    \begin{align*}
        &\left\|\widetilde{G}_{N}'(\lambda,E)\right\|<4 e^{N^{\frac{4}{3}\kappa}},\\
        &\left|\widetilde{G}_{N}'(\lambda, E)(m,n)\right|<2e^{-|m-n|}, \textup{ for } |m-n|>\frac{N}{10},
    \end{align*}
    which implies $(x,y)\in X_{N}^{c}(\lambda, E)$. Therefore, $X_{N}(\lambda, E)\subset \widetilde{X}_{N}(\lambda, E)$, and
    \begin{equation*}
        \textup{mes } X_{N}(\lambda, E)<e^{-N^{\kappa}}.
    \end{equation*}
    Moreover, for $(x,y)\notin X_{N}(\lambda, E)$, by Lemma \ref{peturbation lemma} and \eqref{duijiaoxianchajv2} again, 
    \begin{align*}
        &\|\widetilde{G}_{N}(\lambda,E)\|< e^{N^{b}}, \\
        & |\widetilde{G}_{N}(\lambda, E)(m,n)|<10e^{-|m-n|}, \textup{ for } |m-n|>\frac{N}{10},
    \end{align*}
    which completes the proof.
    
   Furthermore, it is straight from \eqref{bianhuanyongde1} that for all $(x,y)\notin X_{N}(\lambda, E)$, estimations \eqref{yinli4.2guji1} and \eqref{yinli4.2guji2} hold for $G_{N}(\lambda, E)$.
\end{proof}

\begin{remark}
    The estimation \eqref{yinli4.2guji2} can be actually replaced by 
    \begin{equation*}
        |G_{N}(\lambda, E)(m,n)|<10 e^{-\gamma|m-n|}, \ |m-n|>\frac{N}{10}
    \end{equation*}
    for any $\gamma>0$ as long as $\lambda_0$ is sufficiently large. In the subsequent contexts, we use \eqref{yinli4.2guji2} for  precision and simplicity.
\end{remark}

We have now established the base of the induction, and are at the stage of demonstrating the induction procedure. Before this, we list some technical results.
\subsection{Preparations}
The arguments in this subsection have been proved. We refer readers to \cite{Green-book, JLY20} and the references therein for detailed proofs.
\begin{proposition}[Cartan's estimate]\label{cartanestimate}
    Let \( A(x,y) \) be a self-adjoint $ N \times N $ matrix function of  real parameters $(x,y)\in [-\delta, \delta]^{2}$ satisfying the following conditions.

\begin{enumerate}
\item[(i)] A(x,y) is real analytic in $(x,y)$, and there is a holomorphic extension to 
\begin{equation*}
    \mathcal{D}_{\delta, \gamma}=\{(x,y)\in \C^{2}:|\operatorname{Re} x| < \delta, |\operatorname{Re} y| < \delta,  \quad |\operatorname{Im} y| < \gamma, \quad |\operatorname{Im} y| < \gamma\}
\end{equation*}
satisfying  
\begin{equation*}
    \sup_{(x,y)\in\mathcal{D}_{\delta,\gamma}} \|A(x,y)\| < B_1.
\end{equation*}

\item[(ii)] For each $(x,y)\in [-\delta,\delta]^{2}$, there is a subset $\Lambda\subset [1, N]$ such that 
\begin{equation*}
    |\Lambda|<M
\end{equation*}
and  
\begin{equation*}
    \|(R_{[1,N]\backslash\Lambda}A(x,y)R_{[1,N]\backslash\Lambda})^{-1}\| < B_2.
\end{equation*}

\item[(iii)]  
\begin{equation*}
    \operatorname{mes} \left\{ \sigma \in [-\delta, \delta]^2 \mid \|A(\sigma)^{-1}\| > B_3 \right\} < 10^{-7}\gamma^{2}(1+B_1)^{-2}(1+B_2)^{-2}.
\end{equation*}
\end{enumerate}

Then, supposing
\begin{equation*}
    \epsilon < (1 + B_1 + B_2)^{-10M},
\end{equation*}
we have  
\begin{equation*}
    \operatorname{mes} \left\{ \sigma \in \left[ -\frac{\delta}{2}, \frac{\delta}{2} \right]^{2} \mid \|A(\sigma)^{-1}\| > \frac{1}{\epsilon} \right\} < e^{-c_{0}\left(\frac{\log \epsilon^{- 1}}{M \cdot \log(M + B_1 + B_2 + B_3)}\right)^{1/2}},
\end{equation*}
where the constant $c_{0}>0$.
\end{proposition}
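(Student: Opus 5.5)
This is the matrix-valued Cartan estimate in two variables, which the paper cites from \cite{Green-book, JLY20}. I would prove it by reducing to the scalar Cartan estimate for subharmonic functions through a Schur complement, and then handling the two variables with a Fubini argument.

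\textbf{Step 1: Schur complement.} Fix a point $\sigma_0\in[-\delta,\delta]^2$ and take the set $\Lambda=\Lambda(\sigma_0)$ from (ii). Write $\Lambda'=[1,N]\setminus\Lambda$ and
\[
A=\begin{pmatrix} A_{\Lambda'} & A_{\Lambda'\Lambda}\\ A_{\Lambda\Lambda'} & A_{\Lambda}\end{pmatrix}.
\]
Since $\|A_{\Lambda'}(\sigma_0)^{-1}\|<B_2$ and $\|\partial A\|\lesssim B_1/\gamma$ by Cauchy estimates, the same block stays invertible, with $\|A_{\Lambda'}^{-1}\|<2B_2$, on a complex polydisc of radius $r\sim \gamma(1+B_1)^{-1}(1+B_2)^{-1}$ around $\sigma_0$. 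On that polydisc define
\[
S(\sigma)=A_{\Lambda}-A_{\Lambda\Lambda'}A_{\Lambda'}^{-1}A_{\Lambda'\Lambda},
\]
an $|\Lambda|\times|\Lambda|$ holomorphic matrix with $\|S\|\lesssim B_1+B_1^2B_2$. The standard block-inverse formulas give
\[
\|A^{-1}\|\lesssim (1+B_1B_2)^2\,(1+\|S^{-1}\|).
\]
Conversely, $\|S^{-1}\|\le\|A^{-1}\|$.

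\textbf{Step 2: pass to a scalar function.} Set $f(\sigma)=\log|\det S(\sigma)|$, which is plurisubharmonic. Its upper bound is $f\le M\log(1+B_1+B_2)^{C}$. Cramer's rule gives
\[
\|S^{-1}\|\le \|S\|^{M-1}/|\det S|,
\]
so $\{\|A^{-1}\|>\epsilon^{-1}\}$ is contained in $\{f<-\log\epsilon^{-1}+CM\log(1+B_1+B_2)\}$. By the hypothesis on $\epsilon$, this threshold is still of order $-\log\epsilon^{-1}$.

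For a lower bound on $f$ at some point, use (iii). The measure bound is smaller than $r^2$, so the real square of side $r$ around $\sigma_0$ contains a point $\sigma_1$ with $\|A^{-1}(\sigma_1)\|\le B_3$. Hence $|\det S(\sigma_1)|\ge B_3^{-M}$, that is, $f(\sigma_1)\ge -M\log B_3$.

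\textbf{Step 3: Cartan's lemma, variable by variable.} This is where the two-variable geometry enters, and I expect it to be the main obstacle. I would apply the one-variable Cartan estimate for $\log|h|$ with $h$ holomorphic on a disc: if $\sup\log|h|\le K$ and $\log|h(z_1)|\ge -L$ at some point, then
\[
\operatorname{mes}\{t:\log|h(t)|<-H\}\lesssim r\,e^{-cH/(K+L)}.
\]

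Applying it in $x$ at $y=y_1$ shows that most $x$ satisfy $f(x,y_1)\ge -\sqrt{H}\,M\log(\cdots)$. For those good $x$, apply it again in $y$. Using Fubini, the bad set in the small square then has measure at most
\[
r^2\exp\Bigl(-c\sqrt{H/(M\log(M+B_1+B_2+B_3))}\Bigr), \qquad H\sim\log\epsilon^{-1}.
\]
This is precisely where the square root in the exponent comes from: the total deviation $H$ has to be split as $\sqrt{H}\cdot\sqrt{H}$ between the two one-variable steps. The delicate part is keeping the intermediate lower bound from the first step compatible with the upper bound $K$ needed in the second step.

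\textbf{Step 4: cover.} Cover $[-\delta/2,\delta/2]^2$ by $O(\delta^2/r^2)$ such squares and sum the bounds. The resulting polynomial loss in $B_1,B_2,\gamma^{-1}$ is absorbed into the exponential by the hypothesis $\epsilon<(1+B_1+B_2)^{-10M}$, after adjusting $c_0$.
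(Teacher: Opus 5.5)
The paper gives no proof of this proposition; it only refers to Bourgain's Green's function book and to Jitomirskaya--Liu--Shi. Your proposal is correct and is essentially the standard argument given there: a local Schur complement on a polydisc of radius $\sim\gamma(1+B_1)^{-1}(1+B_2)^{-1}$, Cramer's rule and (iii) to bound $\log|\det S|$ from above and below, and then the one-variable Cartan estimate applied twice with the deviation split as $\sqrt{H}\cdot\sqrt{H}$.

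One small correction to Step 4: the covering costs nothing, because each square contributes $r^2e^{-c\sqrt{H}}$ and the $O(\delta^2/r^2)$ squares sum to $O(\delta^2)e^{-c\sqrt{H}}$. So the hypothesis on $\epsilon$ is needed only in Step 2. It could not have absorbed a genuine polynomial loss in $B_1,B_2,\gamma^{-1}$ anyway, since it does not force the exponent to be large.
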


\begin{lemma}[Coupling lemma]\label{hexincplemma}
    Suppose $0<\tau<1$ and $0<b<1$. Let $A$ be a self-adjoint $N\times N$ matrix satisfying
    \begin{equation*}
            |A(m,n)|<C e^{-\rho |m-n|},~1\le m,n \le N,
    \end{equation*}
     where $C>0$ is a fixed constant and $N$ is a large integer.
    Assume that for any interval $\Lambda\subset [1,N]$ with $|\Lambda|=L>N^{\tau}$, there exists a bound on the inverse
    \begin{equation*}
        \|A_{\Lambda}^{-1}\|<e^{L^{b}}.
    \end{equation*}
    Fixing $M_0$ such that $N^{\tau}<M_{}<2N^{\tau}$, a size-$M_{}$ interval $\Lambda\subset [1, N]$ is called a ``good" one if 
    \begin{equation}
        |A_{\Lambda}^{-1}(m,n)|<e^{-\gamma |m-n|}
    \end{equation}
    for all $m,n\in\Lambda$ with $|m-n|>\frac{M_{}}{10}$, where
    \begin{equation}
        0<\gamma<\frac{1}{10}\rho.
    \end{equation}
    Assume that there are at most $\frac{N^{b}}{M_{}}$ disjoint bad $M_{0}$-intervals in $[1,N]$.

    Under the above assumptions, it holds that
    \begin{equation*}
        |A^{-1}(m,n)|<e^{-(\gamma-N^{-c_{3}(\tau, b)})|m-n|}, \ \ |m-n|>\frac{N}{10},
    \end{equation*}
    where $c_{3}(\tau,b)$ is a constant depending on $\tau$ and $b$.
\end{lemma}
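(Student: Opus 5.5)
The plan is the standard resolvent-expansion (paving) argument. Fix $m,n\in[1,N]$ with $|m-n|>N/10$ and iterate the resolvent identity along a path from $m$ towards $n$. At each step, choose an interval $\Lambda\ni m'$ around the current point $m'$. If $m'$ lies in a good $M$-interval, take $\Lambda$ of size $M$ with $m'$ roughly central, at distance $\gtrsim M/10$ from $\partial\Lambda$ except at the endpoints of $[1,N]$. Otherwise, take $\Lambda$ to be a larger interval swallowing a whole cluster of bad $M$-intervals together with a margin of good intervals on each side.

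First, since $A$ has off-diagonal decay $Ce^{-\rho|m-n|}$ rather than nearest-neighbour structure, I would use the identity
\[
A^{-1}(m',n)=A_\Lambda^{-1}(m',n)\chi_\Lambda(n)-\sum_{k\in\Lambda,\,l\notin\Lambda}A_\Lambda^{-1}(m',k)A(k,l)A^{-1}(l,n).
\]
For a good $\Lambda$ and $|m'-k|>M/10$, the term is bounded by $e^{-\gamma|m'-k|}Ce^{-\rho|k-l|}$. For $k$ near $m'$ (hence far from $\partial\Lambda$), the factor $e^{-\rho|k-l|}$ with $\rho>10\gamma$ beats the crude bound $\|A_\Lambda^{-1}\|\le e^{M^b}$ on the short range $|m'-k|\le M/10$. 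So one step yields
\[
|A^{-1}(m',n)|\le \sum_l e^{-\gamma'|m'-l|}|A^{-1}(l,n)|
\]
with $\gamma'=\gamma-O(M^{-1}\log M)$, plus a harmless term when $n\in\Lambda$.

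Second, I handle the bad region. Because there are at most $N^b/M$ disjoint bad $M$-intervals, their union (after merging overlapping ones into clusters) has total length at most about $3N^b$. Around a cluster $J$, take $\Lambda$ to be $J$ enlarged by $\sim M$ on each side. Its length $L\ge M>N^\tau$, so the hypothesis gives $\|A_\Lambda^{-1}\|<e^{L^b}\le e^{(CN^b)^b}$. Crossing such a $\Lambda$ costs at most a factor $e^{L^b}$ and a loss of $L$ in distance. The total distance lost over all clusters is $\lesssim N^b$, and the total multiplicative loss is $\exp(\sum L_j^b)\le \exp(CN^{b})$. Both are $o(|m-n|)$ since $|m-n|>N/10$ and $b<1$.

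Third, I iterate, running the expansion until the path reaches distance $\lesssim M$ from $n$ or the number of steps reaches about $|m-n|/M$. This gives
\[
|A^{-1}(m,n)|\le e^{-\gamma|m-n|\,(1-O(M^{-1}\log M)-O(N^{b-1}))}\cdot e^{CN^b}\cdot \|A^{-1}\|,
\]
where the last factor comes from terminating with the global bound. That bound is $\le e^{N^b}$, taking $\Lambda=[1,N]$ in the hypothesis. Absorbing all errors into the exponent gives a rate $\gamma-N^{-c_3}$ with $c_3=\min(\tau,1-b)/2$, say, using $M\sim N^\tau$ and $|m-n|>N/10$.

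The main obstacle is the bookkeeping in the combinatorial sum over paths. With long-range $A$, each step branches over all $l$, and one must show the sums $\sum_l e^{-\gamma'|m'-l|}$ contribute only factors $(1+O(M^{-1}))$ per step. The plan is to absorb these using the margin $\rho-\gamma\ge 9\gamma$ and a geometric-series bound. A second delicate point is that consecutive bad steps cannot be merged naively. I would therefore fix the cluster enlargements globally in advance, so that each cluster is crossed at most once along the chain of intervals and each crossing pays its $e^{L^b}$ cost only once.
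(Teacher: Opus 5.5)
The paper does not prove this lemma; it cites Bourgain's monograph and Jitomirskaya--Liu--Shi. Your good-site step and your final absorption of errors are the standard ones. The gap is in your Step 2.

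\textbf{The gap.} You assert that each bad cluster is crossed at most once, so that its cost $e^{L^b}$ and its undecayed length $L$ are paid once. Fixing the enlargements in advance does not enforce this. In the expansion $A^{-1}(m',n)=-\sum_{k\in\Lambda,\,l\notin\Lambda}A_\Lambda^{-1}(m',k)A(k,l)A^{-1}(l,n)$ with $\Lambda=\tilde J$, the new site $l$ is any point outside $\tilde J$, on either side, typically next to $\partial\tilde J$. The subsequent good steps also jump in either direction, so they may land in the core $J$ again and force another crossing. Nothing limits the number of such returns except the total step count.

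The only decay available to pay for a return is what accumulates on the way back from $\partial\tilde J$ to $J$. With your margin $\sim M$ this is only about $e^{-c\gamma M}$. A crossing, however, costs up to $e^{L^b}$ with $L$ as large as $\sim 3N^b$, since all bad $M$-intervals may be adjacent. When $b^2>\tau$, each loop multiplies the bound by $e^{L^b-c\gamma M}\gg 1$, and the path sum does not close. This is not a corner case: it is exactly the regime in which the lemma is used in Lemma~\ref{inductivelemma}, with $M\sim N^{\delta/5}$ and exponent $1-2\delta^2$.

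\textbf{What is missing.} You need a mechanism that makes revisits pay for themselves while each cluster is charged only once.
\begin{itemize}
\item The margin must be tied to the cluster: large compared with $\gamma^{-1}|\tilde J|^b$, so that the decay of the return trip beats the re-crossing cost. Clusters that then fall into one another's margins must be merged. You must also check that the enlarged bad set still has total length $O(N^{1-c})$.
\item The bookkeeping must charge each cluster's undecayed length and one crossing cost only once. One way: in the triangle inequality, replace the stretch of path between its first and last visits to $J$ by the bound $|J|$, and spend the decay of the discarded intermediate travel on the extra crossings. Equivalently, run a maximum-principle argument with a weight whose logarithm is $\gamma'$-Lipschitz and rises by about $|\tilde J|^b$ across each margin.
\end{itemize}
This is where the real work of the coupling lemma lies. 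Once it is in place, the losses are of order $M^{-1}\log M+N^{-c}$, which gives the rate $\gamma-N^{-c_3}$.
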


\begin{lemma}\label{donglixveyinli}
    Suppose $\mathcal{S}\subset [0,1]^{\nu}$ is a semi-algebraic set of degree $B$ with $\operatorname{mes}_{\nu}\mathcal{S} <\eta$.  Let $N$ be a large number satisfying
\begin{equation*}
    \log B\ll \log N < \log \eta^{-1}.
\end{equation*}  
Assume that $\omega\in\T^{\nu}$ satisfies
    \begin{equation*}
        \|l\cdot\omega\|>c'_{0} |l|^{-\nu-1}, \textup{ for } l\in\Z^{\nu}\setminus\{0\},\  |l|\le N, 
    \end{equation*}
    where $c>0$ is a constant.
Then, for any $x_{0}\in \T^{\nu}$, we have
\begin{equation*}
    \# \{k=1,...,N|x_{0}+k\omega \in\mathcal{S}(\textup{mod } 1)\}<N^{1-\delta},
\end{equation*}
where $0<\delta<1$ is an absolute constant. 
\end{lemma}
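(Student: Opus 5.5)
The plan is to compare the orbit count with the measure of $\mathcal{S}$ through a discrepancy estimate. Since $\mathcal{S}$ may have very irregular shape, we first replace it by a union of boxes whose number is controlled by the degree $B$. The two inputs are the Diophantine condition (for the discrepancy) and the semi-algebraic structure (for the geometry).

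\emph{Step 1: discrepancy of the orbit.} Write $D_N(x_0)$ for the box discrepancy of the points $\{x_0+k\omega\}_{k=1}^N$ in $\T^\nu$. We apply the Erd\H{o}s--Tur\'an--Koksma inequality with cut-off $H$:
\begin{equation*}
D_N \lesssim_\nu \frac{1}{H}+\sum_{0<|l|\le H}\frac{1}{r(l)}\left|\frac{1}{N}\sum_{k=1}^N e^{2\pi i l\cdot(x_0+k\omega)}\right|, \qquad r(l)=\prod_j \max(1,|l_j|).
\end{equation*}
The inner exponential sum is at most $\bigl(N\|l\cdot\omega\|\bigr)^{-1}$. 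For $|l|\le H\le N$ the hypothesis gives $\|l\cdot\omega\|>c_0'|l|^{-\nu-1}$. Hence $D_N\lesssim H^{-1}+N^{-1}H^{\nu+1}(\log H)^\nu$. Choosing $H=N^{1/(\nu+2)}$ gives $D_N\le N^{-c_\nu}$ uniformly in $x_0$, with $c_\nu>0$ depending only on $\nu$. It follows that for every axis-parallel box $Q$,
\begin{equation*}
\bigl|\#\{k\le N: x_0+k\omega\in Q\}-N\operatorname{mes}Q\bigr|\lesssim N^{1-c_\nu}.
\end{equation*}

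\emph{Step 2: box approximation of $\mathcal{S}$.} Fix a scale $\epsilon$ and the grid of columns $Q_\alpha\times[0,1]$, where $Q_\alpha$ runs over the $\epsilon^{-(\nu-1)}$ cubes of side $\epsilon$ in the first $\nu-1$ coordinates. We use the standard complexity bounds for semi-algebraic sets (Milnor--Thom/Basu--Pollack--Roy, as in \cite{Green-book}). By these bounds, the projection of $\mathcal{S}\cap(Q_\alpha\times[0,1])$ to the last coordinate, together with the set of last coordinates where some fibre over $Q_\alpha$ meets $\mathcal{S}$, is a union of at most $B^{C}$ intervals. Thickening each endpoint by $\epsilon$ produces a set $U\supset\mathcal{S}$. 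This $U$ is a union of at most $B^C\epsilon^{1-\nu}$ boxes. Its measure satisfies $\operatorname{mes}U\le \operatorname{mes}\mathcal{S}+B^C\epsilon$: the excess consists of the $\epsilon$-thickenings of the endpoints, together with the cells that meet the boundary of $\mathcal{S}$, whose $(\nu-1)$-volume is $\le B^C$ by Crofton/Yomdin--Gromov.

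This is the step I expect to be the main obstacle. One must check carefully that a single fibre-interval description works uniformly over a whole column, so that the boundary contribution is really $B^C\epsilon$ and not larger. If the uniform fibre description fails, I would first try applying a cylindrical algebraic decomposition of $\mathcal{S}$ into $B^C$ cells that are graphs over $(\nu-1)$-dimensional cells, and then run the column argument cell by cell.

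\emph{Step 3: counting.} Combining Steps 1 and 2,
\begin{equation*}
\#\{k\le N:x_0+k\omega\in\mathcal{S}\}\le N\operatorname{mes}U+B^C\epsilon^{1-\nu}N^{1-c_\nu}\le N\eta+NB^C\epsilon+B^C\epsilon^{1-\nu}N^{1-c_\nu}.
\end{equation*}
Take $\epsilon=N^{-c_\nu/(2\nu)}$. The hypothesis $\log N<\log\eta^{-1}$ gives $N\eta<1$. The hypothesis $\log B\ll\log N$ gives $B^C\le N^{c_\nu/(8\nu)}$. With these, each of the three terms is at most $N^{1-\delta}$ for some $\delta=\delta(\nu)>0$, and the claim follows. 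In our application $\nu=2$, so $\delta$ is indeed an absolute constant.
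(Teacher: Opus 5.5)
Your proposal is correct. It is essentially the standard argument behind this lemma, which the paper does not prove but imports from Bourgain's book: polynomial discrepancy of the Diophantine orbit, combined with a cover of $\mathcal{S}$ by $B^{C}\epsilon^{1-\nu}$ pieces at scale $\epsilon$ (Bourgain uses $\epsilon$-balls and the smallness of $\operatorname{mes}\mathcal{S}$, you use boxes and keep the $N\eta$ term), then $\epsilon=N^{-c}$. The step you flag is not an obstacle: a point of $U\setminus\mathcal{S}$ is joined to $\mathcal{S}$ by a segment of length $\lesssim\epsilon$, so it lies in the $C\epsilon$-neighbourhood of $\partial\mathcal{S}\subset\{P=0\}$ (with $P$ the product of the nonzero defining polynomials), and that neighbourhood has volume $\lesssim B^{C}\epsilon$ by the standard tube, or Yomdin--Gromov covering, bound.
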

\begin{remark}
    The notation $\delta = \delta(\omega)$ is maintained throughout the article.
\end{remark}

\subsection{Inductive lemma}
In this subsection, we establish the induction precess. First denote $T(x,y)=D+S$, where $D(n,n)=\widetilde{v}(x+n\omega_{1},y+n\omega_{2})$. Here, $\widetilde{v}$ is an analytic function on $\T^{2}$ and $S$ is a Toeplitz operator satisfying
\begin{equation*}
    |S(m,n)|<C e^{-\rho |m-n|}, \ \textup{ for all} \ m,n\in\Z. 
\end{equation*}
Let 
\begin{equation*}
    \rho>10\gamma>0,\ \ 1>b>1-\delta^{2}, \ \ \kappa=\frac{\delta^{2}}{25},
    \end{equation*}
then we have the following inductive lemma.
\begin{lemma}\label{inductivelemma}
    Let $\bar{N}$ be a sufficiently large integer.
    Assume that for all integer $N$ such that $N_0\le N\le \bar{N}$ (where $N_0$ is sufficiently large depending on $\kappa$, $v$, $w$,$c_0$), there exists a semi-algebraic set $X_{N}\subset \mathbb{T}^{2}$ of degree less than $N^{C_{1}}$ (where $C_{1}$ is as in Lemma \ref{initialization}) such that
        \begin{equation*}
            \textup{mes } X_{N}<e^{-N^{\kappa}},
        \end{equation*}
        and for $(x,y)\notin X_{N}$, we have
        \begin{align}
             &\|G_{N}\|< e^{N^{b}},\label{induction1.1}\\
             &|G_{N}(m,n)|<10 e^{-\gamma|m-n|}, \textup{ for all } |m-n|>\frac{N}{10}.\label{induction1.2}
        \end{align}
        Then, for $\bar{N}<N\le \bar{N}^{\frac{1}{\delta^{2}}}$ and $\omega\in\T^{2}$ satisfying
        \begin{equation*}
            \|l\cdot\omega\|>c_{0}|l|^{-3}, \   \textup{ for all } l\in\Z^{2}\setminus\{0\}, \ |l|\le N,
        \end{equation*}
        there exists a semi-algebraic set $X_{N}\subset \mathbb{T}^{2}$ of degree less than $N^{C_{1}}$  such that
        \begin{equation*}
            \textup{mes } X_{N}<e^{-N^{\kappa}}.
        \end{equation*}
        Moreover, for all $(x,y)\notin X_{N}$, we have
        \begin{align}
             &\|G_{N}\|< e^{N^{b}},\label{induction2.1}\\
             &|G_{N}(m,n)|<10 e^{-(\gamma-\bar{N}^{-c_{4}})|m-n|}, \textup{ for } |m-n|>\frac{N}{10},\label{induction2.2}
        \end{align}
        where $c_{4}>0$ is a constant depending on $\delta$.
\end{lemma}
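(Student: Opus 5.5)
This is the standard Bourgain--Goldstein--Schlag multiscale step. It is built from three earlier results: Lemma \ref{donglixveyinli} to count bad small boxes along the orbit, Cartan's estimate (Proposition \ref{cartanestimate}) for the norm bound, and the coupling Lemma \ref{hexincplemma} for off-diagonal decay.

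\textbf{Scales.} Fix $N$ with $\bar N<N\le \bar N^{1/\delta^2}$ and pick an intermediate scale $M$ with $N_0\le M\le\bar N$ and $M\approx N^{\tau}$, where $\tau\ge\delta^2$ is chosen so that $M^{\kappa}\gg\log N$. Since $\log M\gtrsim\delta^2\log N$ and $\kappa=\delta^2/25$, we get $\operatorname{mes}X_M<e^{-M^\kappa}$ with $\log \eta^{-1}=M^{\kappa}\gg\log N$. The degree $M^{C_1}$ satisfies $\log B\ll\log N$.

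\textbf{Few bad $M$-boxes.} Lemma \ref{donglixveyinli} applies to $\mathcal S=X_M$. For every $(x,y)\in\T^2$, the number of $k\in[-N,N]$ with $(x,y)+k\omega\in X_M$ is then less than $N^{1-\delta}$. By shift covariance $H_{\Lambda_M(k)}(x,y)=H_M((x,y)+k\omega)$, so all but $N^{1-\delta}$ of the boxes $\Lambda_M(k)\subset[-N,N]$ satisfy \eqref{induction1.1}--\eqref{induction1.2}. These bad centers occupy a set $\Omega$ of size at most $MN^{1-\delta}$.

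\textbf{Norm bound via Cartan.} This gives hypothesis (ii) of Proposition \ref{cartanestimate} uniformly: for every $\sigma$, the restriction of $H_N(\sigma)$ to $[-N,N]\setminus\widetilde\Omega$ is invertible with norm $B_2\lesssim e^{2M^b}$. Here $\widetilde\Omega$ is a slight enlargement of $\Omega$, with $|\widetilde\Omega|\lesssim MN^{1-\delta}$. This comes from the standard resolvent-identity paving of the good region by good $M$-boxes; the exponential decay \eqref{induction1.2} makes the boundary terms summable.

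Hypothesis (iii) is a crude measure bound on $\{\|G_N\|>B_3\}$. I would take it from the inductive hypothesis at a scale comparable to $\bar N$: for instance, pave $[-N,N]$ and use that a single complex-analytic (in $E$ or the phase) argument gives $\operatorname{mes}<e^{-\bar N^{\kappa}}$ with $B_3=e^{\bar N^{b}}$. Equivalently, use that $\{\sigma:\|G_N(\sigma)\|>B_3\}$ is small since $\det H_N$ is a nontrivial analytic function whose $\log$ averages are controlled.

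Hypothesis (i) holds with $B_1\lesssim|\lambda|+|E|$ on a strip of width $\gamma\sim1$, since $v,w$ are analytic.

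Cartan is then applied with $\epsilon=e^{-N^{b}}$ and $M_{\rm Cartan}=MN^{1-\delta}$, localized on squares of side $\delta_0$ and summed over $\delta_0^{-2}$ of them. The exceptional measure is
\[
\exp\Bigl(-c_0\bigl(N^b/(MN^{1-\delta}\,M^{b})\bigr)^{1/2}\Bigr).
\]
This is less than $e^{-N^{\kappa}}$ because $b>1-\delta^2$ and $M\le N^{\tau}$ with $\tau$ small, so the exponent is roughly $N^{(b-1+\delta-\tau-\tau b)/2}\ge N^{2\kappa}$. Checking this exponent arithmetic and the admissible range of $\tau$ is routine but must be done carefully.

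\textbf{Off-diagonal decay.} Outside the Cartan exceptional set, the decay estimate \eqref{induction2.2} follows from Lemma \ref{hexincplemma} with $A=H_N$ and $\rho$ from $S$. The interval-inverse bound required there, $\|A_\Lambda^{-1}\|<e^{L^b}$ for every interval of length $L>N^{\tau}$, needs the Cartan step for all subintervals $\Lambda\subset[-N,N]$ (about $N^2$ of them), which only costs a factor $N^2$ in measure. The disjoint bad $M$-intervals number at most $N^{1-\delta}\le N^{b}/M$, again using $b>1-\delta^2$ and $M$ small. The conclusion is decay rate $\gamma-N^{-c_3}$, which yields $\gamma-\bar N^{-c_4}$.

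\textbf{Semi-algebraic replacement.} Finally, $X_N$ is replaced by a semi-algebraic set of degree at most $N^{C_1}$. As in Lemma \ref{initialization}, approximate $v,w$ by trigonometric polynomials of degree $N^4$, define $X_N^c$ by the HS-norm and entry inequalities via Cramer's rule, and use Lemma \ref{peturbation lemma} to sandwich it between the good sets.

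\textbf{Main obstacle.} I expect verifying hypothesis (iii) of Cartan to be the main obstacle, together with balancing the exponents so the final measure is below $e^{-N^{\kappa}}$. For (iii) I would first try the sub-box/analyticity argument at scale about $\bar N$ with a large $B_3$.
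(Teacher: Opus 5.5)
Your architecture (orbit counting via Lemma \ref{donglixveyinli}, Cartan for the norm, the coupling lemma for decay, polynomial approximation for semi-algebraicity) is the paper's. The gap is in the Cartan step, exactly where you flagged it: with the scales you chose, the estimate cannot be closed.

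\textbf{Hypothesis (iii) fails.} Hypothesis (iii) of Proposition \ref{cartanestimate} needs the exceptional measure at level $B_3$ to be below $(1+B_2)^{-2}\approx e^{-4M^{b}}$, where $B_2\approx e^{2M^{b}}$ comes from your good $M$-boxes. Anything you can extract from the inductive hypothesis has measure no smaller than about $e^{-\bar N^{\kappa}}$. Here $\bar N^{\kappa}<N^{\kappa}=N^{\delta^2/25}$. Your choice $M\approx N^{\tau}$ with $\tau\ge\delta^2$ gives $M^{b}\ge N^{\delta^2 b}\gg N^{\delta^2/25}$. So (iii) fails whichever admissible scale you use for it.

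\textbf{The choice $B_3=e^{\bar N^{b}}$ destroys Cartan's conclusion.} $\log B_3$ sits in the denominator $M_{\mathrm{Cartan}}\log(M_{\mathrm{Cartan}}+B_1+B_2+B_3)$. Since $N=\bar N+1$ is allowed, $\log B_3$ can be $\approx N^{b}$. The exponent is then $\bigl(N^{b}/(MN^{1-\delta}N^{b})\bigr)^{1/2}<1$. Your arithmetic silently replaced $\log B_3$ by $M^{b}$.

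\textbf{The coupling threshold coincides with the box size.} You run the coupling lemma with threshold $N^{\tau}$ equal to $M$. So you need Cartan on intervals of length barely above $M$. There the bad set can be a fixed fraction of the interval, and Cartan yields nothing.

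\textbf{The determinant alternative does not help.} It is not equivalent: a lower bound on $\log|\det H_N|$ of the required strength is the large deviation estimate itself.

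\textbf{How the paper closes it: three separated scales.}
\begin{itemize}
\item Hypothesis (ii) uses small boxes of size $M_0=[N^{\delta^6}]$, so $B_2=e^{M_0}$.
\item Hypothesis (iii) comes from paving with boxes of size $L_0=[N^{\delta^2/100}]\le\bar N$. Consider the set where some $(x,y)+k\omega$, $k\in\Lambda'$, lies in $X_{L_0}$. Its measure is at most $Le^{-L_0^{\kappa}}\ll e^{-2M_0}$, since $L_0^{\kappa}=N^{\delta^4/2500}\gg N^{\delta^6}$. Off this set, $\|G_{\Lambda'}\|\le e^{L_0}=B_3$. Because $\log B_3=N^{\delta^2/100}$, it costs only a small power in the Cartan exponent.
\item The coupling lemma is applied at the larger scale $M=[N^{\delta/5}]$. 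Its intervals are made good by pasting good $M_0$-boxes with the resolvent identity.
\item Cartan is then run on every interval of length $L>N^{\delta/5}$, where the gain $L^{\delta}$ beats $M_0$ and $\log B_3$. This gives $\|G_{\Lambda'}\|\le e^{L^{1-2\delta^2}}$ off a set of measure $e^{-L^{2\delta/5}}$.
\end{itemize}
Aiming for $e^{N^{1-2\delta^2}}$ rather than $e^{N^{b}}$ also leaves the slack for the factor $2$ lost when you pass to the semi-algebraic set via Lemma \ref{peturbation lemma}.
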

\begin{proof}
Let $\bar{N}<N\le \bar{N}^{\frac{1}{\delta^{2}}}$.
    Denote $M_0=[N^{\delta^{6}}],L_0=[N^{\frac{1}{100}\delta^{2}}]$. Then, $M_0,L_{0}<\bar{N}$.
    Denote $M_{}=[N^{\delta/5}]$.
    Let $J\subset [-N,N]$ be an $M_{}$-interval such that for all $k\in J$, 
    \begin{align}
        &\left\|G_{\Lambda_{M_0}(k)}\right\|<e^{M_0^{b}},\label{M1}\\
        &\left|G_{\Lambda_{M_0}(k)}\right|<10 e^{-\gamma|m-n|}, \textup{ for } |m-n|>\frac{M_0}{10}.\label{M2}
    \end{align}
    Then, iterating using the resolvent identity (One can refer to Section 3 in \cite{JLY20}), we have
    \begin{align}
        &\|G_{J}\|<e^{M_0^{b+}},\\
        &|G_{J}(m,n)|<e^{-\left(\gamma-N^{-\delta^{7}}\right)|m-n|}, \ \ |m-n|>\frac{M_{}}{10}.\label{zhuankuaiouheyinli}
    \end{align}
    Let $N^{\delta/5}<L\le N$, and let $\Lambda'\subset [-N,N]$ be an $L$-interval.
    Note that $\log M^{C_{1}}_0\ll \log L <\log e^{L^{\kappa}}$. Combining the assumptions of this lemma with Lemma \ref{donglixveyinli}, there are at most $L^{1-\delta}$  values of $k\in[-L, L]$ such that at least one of \eqref{M1} and \eqref{M2} fails. Thus, there are at most $L^{1-\delta}$ disjoint $M_{}$-intervals  $J\subset\Lambda'$ such that \eqref{zhuankuaiouheyinli} fails. 
    
    Now we estimate $\|G_{\Lambda'}\|$. From the above argument, there exists a subset $\Lambda'' \subset\Lambda'$ ( which is the union of the $M$-intervals ) such that $|\Lambda''|<L^{1-\frac{\delta}{2}}$ and
    \begin{equation*}
        \|G_{\Lambda'\setminus\Lambda''}\|<e^{M_0}.
    \end{equation*}
    On the other hand, by the resolvent identity again, we obtain
    \begin{align*}
        &\textup{mes } \{(x,y)| \|G_{\Lambda'}\|>e^{L_{0}}\}\\
        <&\textup{mes } \{(x,y)|(x+k\omega_{1},y+k\omega_{2})\in X_{L_{0}}, \textup{for some }~ k\in\Lambda'\}<L e^{-L_{0}^{\kappa}}\ll e^{-2M_0}.
    \end{align*}
    Thus, by Proposition \ref{cartanestimate}, we have
    \begin{equation*}
        \textup{mes } \{(x,y):\|G_{\Lambda'}\|>e^{L^{1-2\delta^{2}}}\}<e^{-c \left( \frac{L^{1-2\delta^{2}}}{L^{1-\delta}\cdot N^{\delta^2/50}}\right)^{1/2
        } }<e^{-L^{\frac{2}{5}\delta}}.
    \end{equation*}
    That is to say, there exist an exceptional set $\widetilde{X}_{N}\subset\T^{2}$ with
    \begin{equation*}
        \textup{mes } \widetilde{X}_{N}<N e^{-N^{{2\delta^{2}}/{25}}} <e^{-N^{{\delta^{2}}/25}},
    \end{equation*}
    such that for $(x,y)\notin\widetilde{X}_{N}$,  we have
    \begin{equation*}
        \|G_{\Lambda'}\| \le e^{L^{1-2\delta^{2}}}
    \end{equation*}
    for all $L$-interval $\Lambda'\subset [-N,N]$ with $N^{\delta}<L<N$.
    Recall there are at most $N^{1-\delta}<\frac{N^{1-2\delta^{2}}}{N^{\frac{\delta}{5}}}$ disjoint $M_{}$-intervals $J\subset [-N,N]$ such that \eqref{zhuankuaiouheyinli} fails. By Lemma \ref{hexincplemma}, we have
    \begin{align}
        &\|G_{N}\|< e^{N^{1-2\delta^{2}}},\label{induction2.1oo}\\
             &|G_{N}(m,n)|< e^{-\left(\gamma-\bar{N}^{-c_3(\delta)}\right)|m-n|}, \textup{ for } |m-n|>\frac{N}{10},\label{induction2.2oo}
    \end{align}
    for all $(x,y)\notin\widetilde{X}_{N}$.
    
    Now we are at the position to obtain a semi-algebraic substitution of $\widetilde{X}_{N}$. Since $\widetilde{v}$ is analytic function defined on $\T^{2}$, there exists a polynomial $\widetilde{v}_{1}$ of degree less than $N^{4}$ such that
    \begin{align}
        |\widetilde{v}-\widetilde{v}_{1}|<e^{-N^{3}}.\label{putongchajv}
    \end{align}

    Denote $G_{N}'$  the matrix obtained from $G_{N}$ by replacing $\widetilde{v}$ by $\widetilde{v}_{1}$. Then
    define $X_N$ by
    \begin{equation*}
    \begin{aligned}
        X_{N}^{c}=\{(x,y): & \|G_{N}'\|_{HS}< 2N^{} e^{N^{1-2\delta^{2}}}, \\
    & |G_{N}'(m,n)|<2 e^{-\left(\gamma-\bar{N}^{-c_4(\delta)}\right)|m-n|}, \textup{ for all}~ |m-n|>{N}/{10}  \},
    \end{aligned}
    \end{equation*}
    where $\star^c$ denotes the complement of $\star$. One can check that $X_{N}$ is a semi-algebraic set of degree less than $N^{C_{1}}$ by Cramer's rule. Since \eqref{putongchajv} holds, by Lemma \ref{peturbation lemma}, we obtain that, for all $(x,y)\in \widetilde{X}^{c}$, 
    \begin{align*}
        &\|\widetilde{G}_{N}'\|<2 e^{N^{1-2\delta^{2}}},\\
        &|\widetilde{G}_{N}'(m,n)|<2 e^{-\left(\gamma-\bar{N}^{-c_4(\delta)}\right)|m-n|}, \textup{ for } |m-n|>\frac{N}{10},
    \end{align*}
    which implies $(x,y)\in X_{N}^{c}$. Thus, $X_{N}\subset \widetilde{X}_{N}$, and
    \begin{equation*}
        \textup{mes } X_{N}(\lambda, E)<e^{-N^{\kappa}}.
    \end{equation*}
    Moreover, by Lemma \ref{peturbation lemma} again, for all $(x,y)\notin X_{N}$,  \eqref{induction2.1} and \eqref{induction2.2} are valid.
\end{proof}

Without loss of genrality, we assume $\gamma=1$. Then, from the above Lemma \ref{inductivelemma} and the base of the induction \ref{initialization}, we can deduce the following result.

\begin{proposition}\label{greenestimate}
    There exist a sufficiently large number $\lambda_{0}(v,w,c_{0})>0$ such that assume $|\lambda|>\lambda_{0}(v,w,c_{0})$, then the following statement holds.
    
    Assume that $N$ is a large integer and $\omega$ satisfies
    \begin{equation*}
            \|l\cdot\omega\|>c'_{0}|l|^{-3}, \   \textup{ for } l\in\Z^{2}\setminus\{0\}, \ |l|\le N.
    \end{equation*}
    Then, there exists a semi-algebraic set $X_{N}(E)\subset \mathbb{T}^{2}$ of degree less than $N^{C_{1}}$  such that
        \begin{equation*}
            \textup{mes } X_{N}(E)<e^{-N^{\kappa}}.
        \end{equation*}
        Moreover, for all $(x,y)\notin X_{N}( E)$, we have
        \begin{align}
             &\|G_{N}\|< e^{N^{b}},\label{gestimate1}\\
             &|G_{N}(m,n)|<10 e^{-\frac{1}{2}|m-n|}, \textup{ for all }~ |m-n|>\frac{N}{10}.\label{gestimate2}
        \end{align}
\end{proposition}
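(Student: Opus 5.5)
Our plan is to prove Proposition \ref{greenestimate} by induction on scales: Lemma \ref{initialization} supplies the base case and Lemma \ref{inductivelemma} is iterated along a super-exponential sequence of scales.

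\textbf{Reduction to the form of Lemma \ref{inductivelemma}.} Fix $E$ and split into the two regimes of Lemma \ref{initialization}.
\begin{itemize}
\item If $|E|\le|\lambda|$, work with $\widetilde H=\lambda^{-1}H=V-\frac{E}{\lambda}W+\lambda^{-1}\Delta$.
\item If $E$ lies in a unit interval $I$ with $d_I\ge|\lambda|$, work with $\widetilde H=E^{-1}H$.
\end{itemize}
In both cases $\widetilde H=D+S$ with $D(n,n)=\widetilde v(x+n\omega_1,y+n\omega_2)$. Here $\widetilde v(x,y)=v(x)-\frac{E}{\lambda}w(y)$, respectively $\frac{\lambda}{E}v(x)-w(y)$, which is analytic on $\T^2$ with bounds uniform in $E$. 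The operator $S$ is a multiple of $\Delta$, so it is Toeplitz and $|S(m,n)|\le Ce^{-\rho|m-n|}$ for any $\rho$, since only nearest neighbours are nonzero; we take $\rho>10\gamma$ with $\gamma=1$. Green's function bounds for $\widetilde G_N$ transfer to $G_N$ by \eqref{bianhuanyongde} and \eqref{bianhuanyongde1}, since $|\lambda|^{-1},|E|^{-1}\le 1$.

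\textbf{Base case.} By Lemma \ref{initialization} the hypotheses of Lemma \ref{inductivelemma} hold for all $N_0\le N\le \bar N_0:=(\log|\lambda|)^{1/(2\kappa)}$, respectively $(\log d_I)^{1/(2\kappa)}\ge(\log|\lambda|)^{1/(2\kappa)}$. This range is nonempty and $\bar N_0$ is as large as we like once $\lambda_0(v,w,c_0)$ is large. This is where $\lambda_0$ is fixed; it must also guarantee $N_0$ exceeds the threshold required in Lemma \ref{inductivelemma}.

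\textbf{Iteration.} Set $\bar N_{k+1}=\bar N_k^{1/\delta^2}$. Applying Lemma \ref{inductivelemma} with $\bar N=\bar N_k$ gives the conclusions for all $N\le\bar N_{k+1}$, with decay rate $\gamma_{k+1}=\gamma_k-\bar N_k^{-c_4}$. The hypotheses are reproduced for the next step: the measure bound $e^{-N^\kappa}$, the degree bound $N^{C_1}$, and the norm bound $e^{N^b}$. The diophantine condition $\|l\cdot\omega\|>c_0'|l|^{-3}$ for $|l|\le N$ is exactly what the lemma needs at scale $N$; we take $c_0'$ as the $c_0$ there.

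\textbf{Main obstacle.} The delicate point is that the rate loss is summable and the constant $10$ does not deteriorate. The constant issue is handled because Lemma \ref{inductivelemma} as stated returns the same prefactor $10$. For the rate, $\sum_k\bar N_k^{-c_4}\le 2\bar N_0^{-c_4}<\frac12$ once $\bar N_0$ is large, which again follows from $\lambda$ large. Hence $\gamma_k\ge 1-\frac12=\frac12$ for all $k$, and \eqref{gestimate2} holds; \eqref{gestimate1} is \eqref{induction2.1}.

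A subtlety is that in the inductive lemma the decay hypothesis \eqref{induction1.2} uses a fixed $\gamma$ while the output has $\gamma-\bar N^{-c_4}$. We therefore apply the lemma at stage $k$ with $\gamma=\gamma_k$, noting that $\rho>10\gamma_k$ continues to hold as $\gamma_k$ decreases. Finally, $X_N(E)$ is the set produced at the stage containing $N$; all bounds are uniform in $E$, so no dependence on $E$ beyond the set itself arises.
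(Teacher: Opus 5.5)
Your proposal is correct and follows the same route as the paper, which sets $\gamma=1$ and deduces the proposition by taking Lemma \ref{initialization} as the base case and iterating Lemma \ref{inductivelemma}. The details you add are exactly what the paper leaves implicit: rescaling by $\lambda^{-1}$ or $E^{-1}$, choosing $\lambda_0$ so the base range $[N_0,(\log|\lambda|)^{1/(2\kappa)}]$ is long enough, and checking $\sum_k \bar N_k^{-c_4}<\tfrac12$ so the decay rate stays above $\tfrac12$.
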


\section{Exponential decay for the generalized eigenvector}
This section is dedicated to proving that every generalized eigenvector for operator $H(\lambda,E)$ decays exponentially for almost all $\omega\in\mathbb{T}^2$. First, we list some technical lemmas which has been proved. One can refer to \cite{Green-book}.

\subsection{Preparations}

\begin{lemma}\label{semi}
    Let $S\subset\mathbb{R}^n$ be a semi-algebraic set of degree $B$, then any projection of $S$ is semi-algebraic of degree at most $B^{C_0}$, where $ C_0=C_0(n)$.
\end{lemma}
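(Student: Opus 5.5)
The statement is the classical projection property of semi-algebraic sets (Tarski--Seidenberg with quantitative degree control), and I would prove it by reducing it to quantifier elimination with degree bounds. Write $S=\bigcup_{j}\bigcap_{\ell}\{P_{j\ell}\,\varsigma_{j\ell}\,0\}$ with $\varsigma_{j\ell}\in\{<,\le,=,>,\ge\}$. Here the degree $B$ controls the number of polynomials and their degrees, i.e.\ $\sum \deg P_{j\ell}\le B$. By induction it suffices to treat the projection $\pi:\R^n\to\R^{n-1}$ forgetting one coordinate $t$. Composing at most $n$ such steps gives a bound $B^{C_0(n)}$, since each step raises the degree to a fixed power.

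For the one-variable step, I fix the family $\mathcal{P}=\{P_{j\ell}\}$, viewed as polynomials in $t$ with coefficients in $\R[z]$, $z\in\R^{n-1}$. I then form the usual elimination family $\mathcal{Q}\subset\R[z]$, consisting of:
\begin{itemize}
\item all leading and truncated coefficients;
\item the discriminants and subresultant (principal subresultant) coefficients of each $P$ with $\partial_t^k P$;
\item the subresultant coefficients of pairs $P,P'\in\mathcal{P}$ and of their $t$-derivatives.
\end{itemize}
On each cell where every $Q\in\mathcal{Q}$ has constant sign, the number and ordering of the real roots in $t$ of the polynomials in $\mathcal{P}$ are constant, so the sign pattern realised along the fibre is constant. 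Hence $\pi(S)$ is a union of sign-condition sets of $\mathcal{Q}$.

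For the degree count: $\#\mathcal{Q}\lesssim (\#\mathcal{P})^2 B^{2}$, and each $Q$ has degree $\lesssim B^2$, since subresultants are determinants of Sylvester-type matrices of size $\le 2B$ with entries of degree $\le B$. The number of realisable sign conditions of $\mathcal{Q}$ is bounded, by Milnor--Thom/Warren type bounds, by $(\#\mathcal{Q}\cdot\max\deg Q)^{O(n)}$. Summing gives a total degree $\le B^{C(n)}$ for $\pi(S)$, which is the required bound.

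The main obstacle is making the invariance of the root configuration over the sign cells rigorous, in particular handling the degeneration of leading coefficients and common roots. This is the Thom-lemma / cylindrical algebraic decomposition argument. Rather than reprove it, I would cite the effective quantifier-elimination bounds of Basu--Pollack--Roy, as is done in \cite{Green-book}, Chapter 9, from which the statement follows directly.
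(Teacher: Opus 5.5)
Your proposal ends up where the paper does: the paper gives no argument for this lemma and simply refers to \cite{Green-book}, where the statement is the quantitative Tarski--Seidenberg bound of Basu--Pollack--Roy, which is exactly the theorem you fall back on, so the approach is essentially the same. One caution about your heuristic sketch: for a Collins-type family like your $\mathcal{Q}$, constancy of the root configuration is only guaranteed on \emph{connected} sign-invariant sets (delineability), not on whole sign-condition sets. For example, take $P=(t-10)(t-11)(t-14)$ and $Q=(t-q_1(z))(t-q_2(z))$; the parameter values with $(q_1,q_2)=(10.4,10.9)$ and $(9.9,11.5)$ give every polynomial in your $\mathcal{Q}$ the same sign, yet only the second lies in $\pi(\{P<0,\,Q<0\})$. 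So ``$\pi(S)$ is a union of sign-condition sets of $\mathcal{Q}$'' does not follow as written, and it is the cited theorem, not the sketch, that actually proves the lemma.
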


\begin{lemma}\label{semi2}
    Suppose that $S\subset[0,1]^{2n}$ is a semi-algebraic set of degree $B$ with $\textup{mes}_{2n}S<\eta$ and $\log B\ll\log \dfrac{1}{\eta}$. Denote $(\omega,z)\in[0,1]^n\times[0,1]^n$  the product variable. Moreover, denote $\{e_j|~0\leq j\leq n-1\}$   the $\omega$-coordinate vectors. Fix $\epsilon>\eta^{\frac{1}{2n}}$, then there exists a decomposition $S=S_1\cup S_2$, such that $S_1$ satisfies
    \begin{equation*}
        \textup{mes}_n(\textup{Proj}_{\omega}S_1)<B^{C_0}\epsilon.
    \end{equation*}
    Furthermore, $S_2$ satisfies the transversality property
    \begin{equation*}
        \textup{mes}_n(S_2\cap L)<B^{C_0}\epsilon^{-1}\eta^{\frac{1}{2n}}
    \end{equation*}
    for any $n$-dimensional hyperplane $L$ with $\max\limits_{0\leq j\leq n-1}|\textup{Proj}_L(e_j)|<\dfrac{\epsilon}{100}$.
\end{lemma}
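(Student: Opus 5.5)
This is Bourgain's transversality lemma (Lemma 1.18 of \cite{Green-book}). I would prove it by a cell decomposition of $S$ together with a dichotomy according to how much of the $\omega$--projection each cell occupies. Throughout, $C_0$ may grow from line to line, but it depends only on $n$. Write $\tau=\eta^{\frac{1}{2n}}$, so that $\epsilon>\tau$. The hypothesis $\log B\ll\log\frac1\eta$ guarantees that every factor $B^{C_0}$ below is still much smaller than the relevant negative powers of $\eta$.

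\emph{Step 1: cell decomposition.} By Lemma \ref{semi} and the standard cylindrical decomposition of semi-algebraic sets, write $S=\bigcup_{\alpha} C_\alpha$ as a union of at most $B^{C_0}$ semi-algebraic cells. Each $C_\alpha$ is a smooth connected manifold of degree at most $B^{C_0}$, and the decomposition can be chosen compatible with the $\omega$-projection. The cells of dimension less than $2n$ do not intersect a generic $n$-plane in positive $n$-dimensional measure. More precisely, a degree-$B^{C_0}$ cell of dimension at most $2n-1$ meets an $n$-plane in a set of dimension at most $n-1$, unless the plane lies in a special family. I would put such degenerate intersections into $S_1$ by the same projection argument as in Step 2 below. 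This reduces the problem to open cells $C_\alpha$ of full dimension $2n$.

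\emph{Step 2: definition of $S_1$.} Cover $[0,1]^n_\omega$ by a grid of cubes $Q$ of side $\epsilon$. Call $Q$ bad if
\[
\textup{mes}_{2n}\bigl(S\cap (\widetilde Q\times[0,1]^n)\bigr)>\epsilon^{n}\tau^{n}\epsilon^{-1}\tau,
\]
where $\widetilde Q$ is the cube $Q$ enlarged by a factor $3$. By Fubini, the number of bad cubes is at most $3^n\eta\,\epsilon^{-n}(\epsilon^{n}\tau^{n}\epsilon^{-1}\tau)^{-1}$. The total $\omega$-measure of the bad cubes is therefore at most $C\eta\,\tau^{-n-1}\epsilon$. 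Since $\eta=\tau^{2n}$, this quantity is at most $\epsilon$. Set $S_1=S\cap(\bigcup_{Q\ \mathrm{bad}}Q\times[0,1]^n)$, and add the degenerate pieces from Step 1. Then $\textup{mes}_n(\textup{Proj}_\omega S_1)<B^{C_0}\epsilon$, as required.

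\emph{Step 3: transversality for $S_2=S\setminus S_1$.} Let $L$ be an $n$-plane with $\max_j|\textup{Proj}_L e_j|<\epsilon/100$. Then $L$ is a graph $\omega=\omega_0+Az$ with $\|A\|\lesssim\epsilon/100$. Hence, as $z$ ranges over $[0,1]^n$, the point $\omega$ moves by less than $\epsilon$. Consequently $L\cap S_2$ lies inside $\widetilde Q\times[0,1]^n$ for a single good cube $Q$, and the same holds for every parallel translate $L+t$ with $|t|\le\epsilon$ taken in the $\omega$-directions. The function
\[
t\mapsto \textup{mes}_n\bigl(S_2\cap(L+t)\bigr)
\]
has integral over the ball $|t|<\epsilon$ at most $\textup{mes}_{2n}(S\cap(\widetilde Q\times[0,1]^n))$ (up to a Jacobian that is $\approx 1$). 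By the goodness of $Q$, this integral is at most $\epsilon^{n}\cdot\tau^{n+1}\epsilon^{-1}$. Its average over the ball is therefore at most $\epsilon^{-1}\tau^{n+1}\le\epsilon^{-1}\tau$.

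\emph{Main obstacle.} The difficulty is passing from this \emph{average} over translates to the single plane $L$. I would handle it with the semi-algebraic structure. On each open cell the slice measure $t\mapsto\textup{mes}_n(C_\alpha\cap(L+t))$ is a semi-algebraic function of degree at most $B^{C_0}$. The translation parameter $t$ should therefore be refined, via a further decomposition of the $t$-ball into at most $B^{C_0}$ cells, so that on each piece the slice measure is comparable to its supremum; here I would use the Gromov--Yomdin type bounds on derivatives of semi-algebraic functions along cells. Where this comparability fails, the corresponding $t$-region has small measure and projects into a set that I would also place into $S_1$, at a cost of at most $B^{C_0}\epsilon$ in projection. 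On the remaining pieces the average bound upgrades to $\textup{mes}_n(S_2\cap L)<B^{C_0}\epsilon^{-1}\eta^{\frac{1}{2n}}$. I expect this averaging-to-pointwise step to be the only genuinely delicate point; the rest of the argument is bookkeeping with Fubini and degree bounds.
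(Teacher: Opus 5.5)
There is a genuine gap, and it comes before the step you flag as delicate. (The paper gives no proof of this lemma; it quotes it from \cite{Green-book}, so the comparison here is with what the statement requires.)

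\textbf{Step 2 is vacuous, and no threshold on $\epsilon$-column masses can fix it.} Your bad-cube threshold is $\epsilon^{n-1}\tau^{n+1}=(\epsilon/\tau)^{n-1}\eta\ge\eta>\textup{mes}_{2n}S$. So no cube is ever bad, and Step 2 always returns $S_1=\emptyset$. Lowering the threshold does not help:
\begin{itemize}
\item It cannot go below $\sim\epsilon^n\eta$. Otherwise the harmless slab $[0,1]^n\times\{|z_1-z^*|<\eta/3\}$, whose $\omega$-projection is everything, would be flagged in every column.
\item The strip $S=\{(\omega,z)\in[0,1]^{2n}:|\omega-\omega^*|_\infty<\rho\}$ with $(2\rho)^n\ll\epsilon^n\eta$ has bounded degree and smaller column mass than that, so it is never flagged.
\item Yet the admissible planes $L=\{\omega'\}\times\R^n$ with $|\omega'-\omega^*|_\infty<\rho$ satisfy $\textup{mes}_n(S\cap L)=1\gg\epsilon^{-1}\eta^{1/2n}$ (say for $\epsilon=\eta^{1/4n}$). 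Hence essentially all of this $S$ must lie in $S_1$.
\end{itemize}
The same blind spot affects Step~1. The measure-zero cell $\{\omega^*\}\times(0,1)^n$ contains an admissible plane, and a mass argument ``as in Step 2'' cannot see it.

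\textbf{The final paragraph does not close the gap.} Everything therefore rests on your last paragraph, and it does not supply the needed step.
\begin{itemize}
\item The slice volume $t\mapsto\textup{mes}_n(C_\alpha\cap(L+t))$ is not semi-algebraic in general. For example, $\textup{mes}_2\{z\in[0,1]^2:z_1z_2\le t\}=t-t\log t$. So there is no degree bound to feed into Gromov--Yomdin.
\item More seriously, $S_1$ must be a single set, independent of $L$, valid for all admissible planes at once, including every tilt $\omega=\omega_0+Az$ with $\|A\|\lesssim\epsilon$.
\item Excising spikes separately for each family of parallel translates gives no control on the $\omega$-projection of the union of the excised pieces. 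That control is precisely the content of the lemma.
\end{itemize}

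\textbf{The missing idea is a dichotomy by \emph{orientation}, not by mass, at scale $r\sim\eta^{1/2n}$.} Since $S$ contains no ball of radius $r$, it lies in the $r$-neighbourhood of $\partial S$. This is a $(2n-1)$-dimensional semi-algebraic set of degree $B^{C}$, so $S$ is covered by $B^{C}r^{1-2n}$ balls of radius $r$. Split (the smooth part of) $\partial S$ into $B^{C}$ pieces according to whether its unit normal $\nu=(\nu_\omega,\nu_z)$ has $|\nu_z|<\epsilon$ or $|\nu_z|\ge\epsilon$.
\begin{itemize}
\item Pieces of the first kind are locally graphs of one $\omega$-coordinate over the remaining variables, with $z$-gradient $\lesssim\epsilon$. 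Using the bounded complexity of the pieces, their $\omega$-projections, even after $r$-thickening, have measure $\lesssim B^{C}\epsilon$. The part of $S$ near these pieces is $S_1$.
\item Pieces of the second kind cross every admissible $L$ at angle $\gtrsim\epsilon$. So their $r$-neighbourhood meets $L$ essentially within an $(r/\epsilon)$-neighbourhood of the $(n-1)$-dimensional set $\partial S\cap L$, whose volume is $\le B^{C}$. This gives $B^{C}\epsilon^{-1}\eta^{1/2n}$.
\end{itemize}
This accounts for both bounds in the statement. Making it rigorous requires semi-algebraic covering and volume estimates, such as Yomdin--Gromov parametrizations and Crofton-type bounds for $\partial S\cap L$, rather than averaging over translates.
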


\begin{lemma}\label{coupling lemma}
    Let $I\subset\mathbb{Z}$ be an interval of size $N$, and $\{I_{\alpha}\}$ be sub-intervals of size $M=N^{\delta},$ where $0<\delta<1$. Assume that
    \begin{enumerate}[$(1)$]
        \item If $k\in I$, then there is some $\alpha$ such that
        \begin{equation*}
            \left[k-\frac{M}{4},k+\frac{M}{4}\right]\cap I\subset I_{\alpha}.
        \end{equation*}
        \item For all $\alpha$, it holds that
        \begin{align*}
            \|G_{I_{\alpha}}\|&\leq e^{M^b},\\
            |G_{I_{\alpha}}(n_1,n_2)|&\leq 10e^{-\frac{1}{2}|n_1-n_2|},\textup{for all}~n_1,n_2\in I_{\alpha},|n_1-n_2|>\frac{M}{10}
        \end{align*}
        for some $0<b<1$.
    \end{enumerate}
    Then we have
    \begin{align*}
        \|G_I\|&\leq e^M,\\
        |G_{I}(n_1,n_2)|&\leq e^{-\frac{1}{4}|n_1-n_2|},\textup{for all }n_1,n_2\in I,|n_1-n_2|>\frac{N}{10}.
    \end{align*}
\end{lemma}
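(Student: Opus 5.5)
The plan is to obtain this coupling lemma from the geometric resolvent identity together with a standard path-expansion (Neumann-type) argument, exactly as in the classical setting of \cite{Green-book}. The operator $H_I = R_I(H_0 - EW)R_I$ is a nearest-neighbour (tridiagonal) matrix with off-diagonal entries equal to $-1$. Consequently, for $k\in I$ and an interval $I_\alpha\ni k$ with $[k-M/4,k+M/4]\cap I\subset I_\alpha$, the resolvent identity takes the form
\begin{equation*}
G_I(k,n) = G_{I_\alpha}(k,n)\chi_{I_\alpha}(n) + \sum_{\substack{m\in I_\alpha,\ m'\in I\setminus I_\alpha\\ |m-m'|=1}} G_{I_\alpha}(k,m)\, G_I(m',n).
\end{equation*}
The boundary points $m$ of $I_\alpha$ that lie inside $I$ satisfy $|k-m|\ge M/4>M/10$. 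Hypothesis (2) therefore gives $|G_{I_\alpha}(k,m)|\le 10e^{-M/8}$, and there are at most two such boundary terms.

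The first step is the norm bound. Fix $n$ and set $F(k)=|G_I(k,n)|$. The identity above yields
\begin{equation*}
F(k)\le e^{M^b}\,\mathbf{1}_{|k-n|\le M} + 20e^{-M/8}\max_{|m'-k|\ge M/4} F(m') + (\text{small terms}).
\end{equation*}
I would rather not work pointwise, though. The cleaner route is to build an approximate inverse $\widetilde G=\sum_\alpha \phi_\alpha G_{I_\alpha}\psi_\alpha$ from a partition of unity subordinate to the covering in (1), and then show that $H_I\widetilde G = I + \mathcal{E}$ with $\|\mathcal{E}\|\lesssim N e^{-M/8}\ll 1$. Here one uses that only boundary couplings at distance $\ge M/4$ from the centres contribute. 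Inverting via a Neumann series then gives $\|G_I\|\le 2\sup_\alpha\|G_{I_\alpha}\|\cdot(\#\alpha)\le e^{M}$, since $\#\alpha\le N$ and $\log N\ll M^{1-b}$ when $M=N^\delta$.

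The second step is the off-diagonal decay. For $|n_1-n_2|>N/10$, iterate the resolvent identity starting at $k_0=n_1$. Each application either terminates at the term $G_{I_{\alpha}}(k_j,n_2)$, which is nonzero only if $n_2\in I_{\alpha_j}$, or jumps to a new point $k_{j+1}$ with $|k_{j+1}-k_j|\ge M/4$. Each jump carries a factor $|G_{I_{\alpha_j}}(k_j,m)|\le 10e^{-\frac12|k_j-m|}$ and adds one more step. I stop the iteration after $s\approx 4|n_1-n_2|/M$ steps or when $n_2$ is reached. If stopped early, I bound the remaining factor by $\|G_I\|\le e^{M}$ from the first step.

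The product of the factors along each path is at most $\prod_j 10\,e^{-\frac12 (|k_j-k_{j+1}|-1)}$. The exponent telescopes to at least $\frac12|n_1-n_2| - O(s)$. The number of paths is at most $2^s$, and the per-step loss $\log 20$ is much smaller than $M/8$. The result is
\begin{equation*}
|G_I(n_1,n_2)|\le e^{M^b}\,e^{-\frac12|n_1-n_2|\left(1-O(M^{-1})\right)} + e^{M}e^{-\frac12 |n_1-n_2|+O(s)}\le e^{-\frac14|n_1-n_2|},
\end{equation*}
where the last inequality uses $|n_1-n_2|>N/10\gg M$. This absorbs the $e^{M}$ and $e^{M^b}$ prefactors.

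The main obstacle is the bookkeeping in the second step. The decay rate $\frac12$ must degrade only to $\frac14$, which requires each jump to cover a distance $\ge M/4$ so that the per-step constant $10$ and the two-way branching are dominated. The terminal term also has to absorb the large factor $e^{M}$. Both are handled by the separation $N^{\delta}=M\ll N/10$.
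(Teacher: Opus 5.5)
Your argument is correct, and it is the standard resolvent-identity (paving) proof from Bourgain's book; the paper does not prove this lemma itself but refers to that book. Packaging the norm bound as an approximate inverse, instead of the usual pointwise bound $\max_k|G_I(k,n)|\le 2e^{M^b}$, is a harmless variant and gives invertibility of $H_I$ for free. The one point to watch is the ordering in $\widetilde G$: the partition-of-unity factor must sit on the side opposite to $H_I$, e.g.\ $\widetilde G=\sum_\alpha \chi_{I_\alpha}G_{I_\alpha}\phi_\alpha$ with $H_I\widetilde G$, or $\sum_\alpha \phi_\alpha G_{I_\alpha}\chi_{I_\alpha}$ with $\widetilde G H_I$. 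Otherwise the error contains the commutator $[H_I,\phi_\alpha]G_{I_\alpha}$, which is not small.
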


\subsection{Exponential decay}
Now we establish the exponential decay result for the generalized eigenvectors, subsequently prove the main result of this paper.

\begin{theorem}    
    There exist a sufficiently large number $\lambda_{0}(v,w,c_{0})>0$ such that when $|\lambda|>\lambda_{0}(v,w,c_{0})$, the following statement holds.
    
    Fixing any $(x_0,y_0)\in \mathbb{T}^2$, there exists a set $\widetilde{\mathcal{R}}$ such that $\textup{mes }\widetilde{\mathcal{R}}=0$. Moreover, for any $\omega\notin\widetilde{R}$, $E\in\mathbb{R}$, and an arbitrary generalized eigenvector $\psi$ satisfying $$H_0(\lambda,x_0,\omega_1)\psi=EW(y_0,\omega_2)\psi,$$  $$|\psi(n)|\leq C_{\psi}(1+|n|)^{C'_{\psi}},$$
    we have
    \begin{equation*}
        |\psi(n)|\leq \widetilde{C}_{\psi,\omega}e^{-\frac{1}{18}|n|}
    \end{equation*}
   for some $\widetilde{C}_{\psi,\omega}>0$.  Hence every generalized eigenvector of $H_0(\lambda,x_0,\omega_1)$ decays exponentially,
\end{theorem}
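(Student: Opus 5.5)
The argument follows the Bourgain--Goldstein scheme, built on Proposition \ref{greenestimate}, with the energy eliminated through the semi-algebraic Lemmas \ref{semi} and \ref{semi2}. It has three steps: a deterministic decay criterion, the construction and measure estimate of $\widetilde{\mathcal{R}}$, and a Borel--Cantelli argument.

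\textbf{Step 1 (decay criterion).} Let $\psi$ be a generalized eigenvector with $H(\lambda,E,x_0,y_0)\psi=0$. Suppose that for a large scale $N$, every $k$ with $N^{C}/2\le |k|\le 2N^{C}$ has a window $I_\alpha\ni k$ of size $N$, containing $[k-N/4,k+N/4]$, on which $G_{I_\alpha}$ satisfies \eqref{gestimate1}--\eqref{gestimate2}. Since $H$ is nearest neighbour, Poisson's formula gives
\[
\psi(k)=-\sum_{n\in I_\alpha,\ n'\notin I_\alpha,\ |n-n'|=1}G_{I_\alpha}(k,n)\,\psi(n').
\]
Lemma \ref{coupling lemma} can also be applied on the annulus $I=[N^C/2,2N^C]$ (and its reflection). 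Either way, combining these bounds with the polynomial bound $|\psi(n)|\le C_\psi(1+|n|)^{C'_\psi}$ yields $|\psi(k)|\lesssim e^{-|k|/18}$ on the annulus, with the slack from $1/4$ down to $1/18$ absorbing the polynomial factors. Note that $W$ enters only diagonally, so no extra weight factor appears.

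\textbf{Step 2 (exceptional set).} Near the origin, $\psi$ is nonzero somewhere, say on $[-N,N]$ after normalization. This forces the box $[-N,N]$ to be bad at $(x_0,y_0)$: otherwise the same Poisson bound would give $|\psi(0)|,|\psi(\pm1)|\lesssim N^{C'}e^{-N/2}$, a contradiction for large $N$. So $E$ lies in
\[
\mathcal{E}_N(\omega)=\{E:\ \|G_{[-N,N]}(E,x_0,y_0,\omega)\|\ge e^{N^b}\ \text{or \eqref{gestimate2} fails}\}.
\]
The aim is to show that, outside a small set of $\omega$, no $E\in\mathcal{E}_N$ produces a bad window at a point $k$ with $|k|\sim N^{C}$. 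To this end, consider the set
\[
S_N=\{(\omega,x,y,E):\ (x,y)\in X_N(E),\ \ \|G_{[-N,N]}(E,x_0,y_0,\omega)\|\ge e^{N^b}/2\}.
\]
The set $S_N$ is semi-algebraic of degree $N^{C}$; to make it so, truncate $v,w$ to polynomials as in Lemma \ref{initialization} and parametrize $E$ via the finitely many eigenvalues of the $(2N+1)$-box pencil, i.e.\ $E$ within $e^{-N^b}$ of an eigenvalue of the generalized problem $H_0=EW$ on $[-N,N]$. Eliminating $E$ by projection (Lemma \ref{semi}), the $(\omega,x,y)$-slice has measure at most $N^{C}e^{-N^{\kappa}}$. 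Then apply Lemma \ref{semi2} in the variables $(\omega,z)$ with $z=(x,y)$, along the lines $z=(x_0,y_0)+k\omega$. For $|k|\sim N^{C}$ these lines are steep in the $\omega$ direction, which gives the transversality required by Lemma \ref{semi2}. Setting $\epsilon=e^{-N^{\kappa/2}}$, the union over $k$ of the bad $\omega$ has measure at most $N^{C}e^{-cN^{\kappa}}$, which is summable in $N$.

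\textbf{Step 3 (conclusion).} By Borel--Cantelli, $\widetilde{\mathcal{R}}=\limsup_N \mathcal{R}_N$ has measure zero. Intersecting with the full-measure Diophantine set gives Proposition \ref{greenestimate} at all scales. For $\omega\notin\widetilde{\mathcal{R}}$, Steps 1--2 apply for all $N\ge N_0(\omega)$, and the annuli cover $|k|\ge N_0^{C}$, so decay holds with constant $\widetilde C_{\psi,\omega}$.

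\textbf{Main obstacle.} The hardest point is Step 2, specifically the energy elimination. Here $E$ is a generalized eigenvalue of an \emph{unbounded} operator, so $E$ ranges over all of $\R$, whereas Lemma \ref{initialization} works on bounded pieces: the region $|E|\le|\lambda|$ and unit intervals $I$ with $d_I\ge|\lambda|$. This requires splitting $\R$ into these unit intervals, running the argument uniformly on each, and checking that the constant $\lambda_0$ does not depend on $I$. Since there are countably many $I$, the per-interval exceptional sets must still yield measure zero, which means the resulting null sets must be unioned carefully.
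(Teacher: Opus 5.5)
There is a genuine gap in Step 2, in how you localize $E$. Your Step 1, the splitting of $\R$ into unit energy intervals, the use of Lemmas \ref{semi} and \ref{semi2} along the lines $z=(x_0,y_0)+k\omega$, and the Borel--Cantelli conclusion all match the paper.

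The problem is that you localize $E$ with the box $[-N,N]$ at the origin, at the same scale $N$ as the windows.

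First, the Poisson argument only shows that this box is bad, i.e.\ $\|G_{[-N,N]}\|\ge e^{N^b}$ \emph{or} \eqref{gestimate2} fails. Your set $S_N$ encodes only the norm alternative, so energies for which only the off-diagonal decay fails are never eliminated.

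Second, even when the norm is large you only get $\textup{dist}(E,\textup{Spec}\,H_{1,[-N,N]})\le\|W^{-1/2}_{[-N,N]}\|^2e^{-N^b}$. This needs a lower bound on $w(y_0+n\omega_2)$ along the box. That is exactly where the degenerate weight enters, and why the paper imposes \eqref{DC2}.

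More seriously, precision $e^{-N^b}$ is far too coarse to transfer goodness of a window from an eigenvalue $E'$ to $E$. Lemma \ref{peturbation lemma} with $B=e^{N^b}$, $\gamma=\frac12$, $K=N/10$ requires $|E-E'|\lesssim N^{-1}e^{-2N^b-N/10}$. A bound $\|G\|<e^{N^b}$ at $E'$ does not prevent the window from having an eigenvalue at distance of order $e^{-N^b}$ from $E'$. If you instead keep $E$ as a continuous variable in $S_N$, your slice bound $N^Ce^{-N^\kappa}$ requires $X_N(E)$ to be stable over $E$-intervals of length $e^{-N^b}$. That fails for the same reason.

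The missing idea is the paper's two-scale localization of $E$:
\begin{itemize}
\item By Lemma \ref{donglixveyinli} at scale $N_2=N^{C_2}$, there is $j_0\le N_2$ such that the $N$-boxes centered at $k_0\pm j_0$ are good at $E$ itself.
\item Poisson then gives $|\psi(k_0\pm(j_0+1))|\lesssim(|k_0|+1)^{C'_\psi}e^{-N/3}$.
\item So $\psi(k_0)=1$ forces $\|G_{[k_0-j_0,k_0+j_0]}(E)\|\ge e^{N/4}$.
\item With \eqref{DC2} bounding $W^{-1/2}$ polynomially, $E$ lies within $e^{-N/5}$ of an eigenvalue of some $\widetilde H_{1,[k_0-j,k_0+j]}$ with $|j|\le N_2$.
\end{itemize}
This precision is exponential in $N$ rather than in $N^b$, so it beats $e^{-2N^b-N/10}$. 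It is these finitely many eigenvalues, a semi-algebraic family in $\omega$, that are eliminated through the sets $G_j$.

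A smaller point concerns your choice of $\epsilon$. For $L=\{(\omega,z_0+k\omega)\}$ with $|k|\sim N^C$, one has $|\textup{Proj}_L e_j|\sim|k|^{-1}$. So Lemma \ref{semi2} only allows $\epsilon\gtrsim N^{-C}$, not $\epsilon=e^{-N^{\kappa/2}}$. The $S_1$-part then yields only a polynomial bound. The paper takes $\epsilon=N_3^{-1/10}$ and obtains $N^{-10}$, which is still summable.
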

\begin{proof}
    Fix a sufficiently large $\lambda$. Denote $I_s$  the interval $[s,s+1)$ with $s \in \Z$ such that $E\in I_s$. For fixed $(x_0,y_0)$, choose an arbitrary nonzero generalized eigenvector $\psi$. Without loss of generality, we assume $\psi(k_0)=1$ for some $k_0\in\mathbb{Z}$. This leads to 
    \begin{equation*}
        ([\lambda v(x_0+n\omega_1)-Ew(y_0+n\omega_2)]\delta_{n_1,n_2}+\Delta)\psi=0.
    \end{equation*}
    
    Let $N$ be a  sufficiently large integer satisfying the conditions of Proposition \ref{greenestimate}, moreover $N\ge |\lambda|+|k_0|+|s|+1$. Denote $N_2=N^{C_2}$, where $C_2$ is chosen such that $N^{C_2\delta-1}\geq 3$. Assume that $\omega$ satisfies the Diophantine conditions
    \begin{equation}\label{DC}
        \|l\cdot\omega\|>c'_0|l|^{-3},\textup{for all}~ l\in\mathbb{Z}^2\backslash\{0\},|l|\leq N,
    \end{equation}
    and
    \begin{equation}\label{DC2}
        \min_{1\le i \le m}\|y_0+(k_0+k) \omega_2-y_i\|>c'_1|k_0+k|^{-A},\textup{for some }A\geq 3,k\in\mathbb{Z}\backslash\{-k_0\},|k|\leq N_2,
    \end{equation}    
    where $\{y_i, 1 \le i \le m\}$ are zeros of the weight $w$.
    From Proposition \ref{greenestimate}, there exists a semi-algebraic set $X_N(E)$ with measure less than $e^{-N^{\kappa}}$ such that
    \begin{align*}
        \|G_{[-N,N]}\|&\leq e^{N^b},\\
        |G_{[-N,N]}(n_1,n_2)|&\leq 10e^{-\frac{1}{2}|n_1-n_2|}, \textup{ for }|n_1-n_2|>\dfrac{N}{10},
    \end{align*}
    for all $(x,y)\notin X_N(E)$. Moreover, from Lemma \ref{donglixveyinli}, we obtain that
    \begin{equation*}
        \#\{|j|\le N_2,(x_0+k_0\omega_1+j\omega_1,y_0+k_0\omega_2+j\omega_2)\in X_N(E)\}<N_2^{1-\delta}.
    \end{equation*}
    Therefore there exists an interval $J\subset[0,N_2]$ of at least size $N\leq N_2^{\delta}/3$, such that $$(x_0+k_0\omega_1+j\omega_1,y_0+k_0\omega_2+j\omega_2)\notin X_N(E)$$
    for all $j\in J\cup(-J)$.
    
    Suppose that the center of $J$ is $j_0$ with $|j_0|\leq N_2$, and denote $[a,b]$ by $\Lambda$. Note that
    \begin{align}\label{representation1}
        \psi &= G_{\Lambda}(\lambda,E,x_0,y_0,\omega) (R_{\Lambda}(H_0(\lambda,x_0,\omega_1)-EW(y_0,\omega_2))R_{\Lambda})\psi\notag\\
        &=-G_{\Lambda} R_{\Lambda}(H_0-EW)(R_{\Lambda^c}\psi) =\psi_{a-1}G_{\Lambda}\delta_a+\psi_{b+1}G_{\Lambda}\delta_b
    \end{align}
for any $E$ not belonging the eigenvalues of $H_\Lambda$.
    On the one hand, take $\Lambda=[k_0+j_0-N,k_0+j_0+N]$, we obtain
    \begin{equation*}
        \psi_{k_0+j_0+1}=\psi_{k_0+j_0-N-1}G_{\Lambda}(k_0+j_0-N,k_0+j_0+1)+\psi_{k_0+j_0+N+1}G_{\Lambda}(k_0+j_0+N,k_0+j_0+1).
    \end{equation*}
    Therefore, with the invariance of Green's function under translation, we have
    \begin{align*}
        |\psi_{k_0+j_0+1}|&\lesssim C_{\psi}(|k_0|+N_2)^{C_{\psi}'}|G_{[k_0+j_0-N,k_0+j_0+N]}(k_0+j_0+1,k_0+j_0+N)|\notag\\
        &=C_{\psi}(|k_0|+N_2)^{C_{\psi}'}|G_{[-N,N]}(1,N)|\notag\\
        &\leq 10C_{\psi}(|k_0|+N_2)^{C_{\psi}'}e^{-\frac{1}{2}(N-1)}\\&\leq \frac{1}{2}(|k_0|+1)^{C_{\psi}'}e^{-\frac{1}{3}N}.
    \end{align*}
    Similarly, take $\Lambda=[k_0-j_0-N,k_0-j_0+N]$. We have
    \begin{equation*}
        |\psi_{k_0-j_0-1}|\leq 10C_{\psi}(|k_0|+N_2)^{C_{\psi}'}e^{-\frac{1}{2}(N-1)}\leq \frac{1}{2}(|k_0|+1)^{C_{\psi}'}e^{-\frac{1}{3}N}.
    \end{equation*}
    On the other hand, take $\Lambda=[k_0-j_0,k_0+j_0]$. Since 
    $\psi(k_0)=\psi_{k_0-j_0-1}G_{\Lambda}(k_0-j_0,0)+\psi_{k_0+j_0+1}G_{\Lambda}(k_0+j_0,0)$, we obtain
    \begin{equation*}
        1=|\psi(k_0)|\leq\|G_{[k_0-j_0,k_0+j_0]}(\lambda,E,x_0,y_0,\omega)\|(|k_0|+1)^{C_{\psi}'}e^{-\frac{1}{3}N}.
    \end{equation*}
    Combing the second Diophantine condition \eqref{DC2} with the properties of the weight $w$, we have
    \begin{equation*}
        |w(y_0+k_0\omega_2+k\omega_2)|>\min_{1\leq i\leq m}\left\{c_w\|y_0+(k_0+k)\omega_2-y_i\|^{C_w}\right\}\gtrsim c_w(|k_0|+N_2)^{-A\cdot C_w}
    \end{equation*}
    for $A\ge3$ and for all $|k|\leq N_2$, $k\ne -k_0$. This implies that $$H_{1,\Lambda}(\lambda,x_0,y_0,\omega)=(W(y_0,\omega_2)^{-\frac{1}{2}}H_0(\lambda,x_0,\omega_1)W(y_0,\omega_2)^{-\frac{1}{2}})_{\Lambda}$$ is bounded, where $\Lambda=[k_0-j_0,k_0+j_0]$. Since it is also self-adjoint, we have
    \begin{equation*}
        d(E,\textup{Spec }H_{1,[k_0-j_0,k_0+j_0]}(\lambda,x_0,y_0,\omega))=\|(H_1-E)_{[k_0-j_0,k_0+j_0]}^{-1}\|^{-1},
    \end{equation*}
    which is less than $$ \left\|W_{[k_0-j_0,k_0+j_0]}^{-\frac{1}{2}}\right\|^2\|G_{[k_0-j_0,k_0+j_0]}(\lambda,E,x_0,y_0,\omega)\|^{-1}<(|k_0|+1)^{A\cdot C_w+C_{\psi}'}e^{-\frac{1}{4}N}.$$
    
    Now substitute $v$ and $w$ by the polynomials $v_1$ and $w_1$ which are the same as before, and denote 
    \begin{align}
        \widetilde{H}_0(\lambda,x_0,\omega_1)&=\lambda v_1(x_0+n_1\omega_1)\delta_{n_1,n_2}+\Delta,\notag\\
        \widetilde{W}(y_0,\omega_2)&=w_1(y_0+n_1\omega_2)\delta_{n_1,n_2},\notag\\
        \widetilde{H}_1(\lambda,x_0,y_0,\omega)&=\widetilde{W}(y_0,\omega_2)^{-\frac{1}{2}}\widetilde{H}_0(\lambda,x_0,\omega_1)\widetilde{W}(y_0,\omega_2)^{-\frac{1}{2}},\notag\\
        \widetilde{G}_{\Lambda}(\lambda,E,x_0,y_0,\omega)&=(\widetilde{H}_0(\lambda,x_0,\omega_1)-E\widetilde{W}(y_0,\omega_2))_{\Lambda}^{-1}\notag.
    \end{align}
    By the Neumann series, we obtain
    \begin{align*}
        \widetilde{G}_{\Lambda}(\lambda,E,x_0,y_0,\omega)&=(\widetilde{H}_{0,\Lambda}-E\widetilde{W}_{\Lambda})^{-1}\notag\\
        &=[H_{0,\Lambda}-EW_{\Lambda}+(\widetilde{H}_{0,\Lambda}-H_{0,\Lambda}+E(\widetilde{W}_{\Lambda}-W_{\Lambda})]^{-1}\notag\\
        &=G_{\Lambda}\sum_{j\geq 0}[G_{\Lambda}(H_{0,\Lambda}-\widetilde{H}_{0,\Lambda}+E(W_{\Lambda}-\widetilde{W}_{\Lambda})]^j.
    \end{align*}
    Let $\Lambda=[k_0-j_0,k_0+j_0]$, 
    \begin{equation*}
        \|H_{0,\Lambda}-\widetilde{H}_{0,\Lambda}+E(W_{\Lambda}-\widetilde{W}_{\Lambda})\|\leq 2N_2\cdot (|\lambda|+|s|+1)e^{-N^3}.
    \end{equation*}
    Similar calculations as in Lemma \ref{peturbation lemma} implies
    \begin{equation*}
        |\widetilde{G}_{\Lambda}(\lambda,E,x_0,y_0,\omega)(n_1,n_2)|\lesssim e^{-\frac{1}{2}|n_1-n_2|}\sum_{j\geq 0}e^{N^b+j(N^b-\frac{1}{2}N^3)}\leq 2e^{N^b-\frac{1}{2}|n_1-n_2|}.
    \end{equation*}
    Moreover, similar to \eqref{representation1}, $\psi$ can be represented as
    \begin{align} \label{represenofpsi}
        \psi&=\widetilde{G}_{\Lambda}(R_{\Lambda}(\widetilde{H}_0-E\widetilde{W})R_{\Lambda})\psi\notag\\
        &=-\widetilde{G}_{\Lambda}R_{\Lambda}(H_0-EW)(R_{\Lambda^c}\psi)+\widetilde{G}_{\Lambda}R_{\Lambda}[(H_0-\widetilde{H}_0)+E(W-\widetilde{W})](R_{\Lambda}\psi)\notag\\
        &=\psi_{a-1}\widetilde{G}_{\Lambda}\delta_a+\psi_{b+1}\widetilde{G}_{\Lambda}\delta_b+\sum_{a\leq l\leq b}[\lambda(v-v_1)(x_0+l\omega_1)+E(w-w_1)(y_0+l\omega_2)]\psi_l\widetilde{G}_{\Lambda}\delta_l
    \end{align}
    for any $\Lambda=[a,b]$. Therefore
    {\small
    \begin{align*}
        1=|\psi(k_0)|\leq& \|\widetilde{G}_{[k_0-j_0,k_0+j_0]}(\lambda,E,x_0,y_0,\omega)\|\sum_{l=k_0-j_0}^{k_0+j_0}|\lambda(v-v_1)(x_0+l\omega_1)+E(w-w_1)(y_0+l\omega_2)||\psi_l| \notag\\
        &+\|\widetilde{G}_{[k_0-j_0,k_0+j_0]}(\lambda,E,x_0,y_0,\omega)\|\left(|\psi(k_0-j_0-1)|+|\psi(k_0+j_0+1)|\right)\notag\\
        \lesssim_\psi & \|\widetilde{G}_{[k_0-j_0,k_0+j_0]}(\lambda,E,x_0,y_0,\omega)\|\left(|k_0|+1\right)^{C_{\psi}'}\cdot \left(e^{-\frac{1}{3}N}+ N_2\cdot(|\lambda|+|s|+1) \cdot e^{-N^3}\right)\notag\\ 
        < & \|\widetilde{G}_{[k_0-j_0,k_0+j_0]}(\lambda,E,x_0,y_0,\omega)\|(|k_0|+1)^{C_{\psi}'}e^{-\frac{1}{4}N}
    \end{align*}}
    for sufficiently large $N$.
    The same as before, 
    \begin{equation*}
        \left\|\widetilde{W}^{-1/2}_{[k_0-j_0,k_0+j_0]}(\lambda,x_0,y_0,\omega)\right\|\leq 2N_2\cdot c_w^{-1}(|k_0|+N_2)^{AC_w}<\infty,
    \end{equation*}
    since $|w-w_1|<e^{-N^3}$.
    This implies that
    \begin{equation*}
        d(E,\textup{Spec }\widetilde{H}_{1,[k_0-j_0,k_0+j_0]}(\lambda,x_0,y_0,\omega))=\left\|\left(\widetilde{H}_{1,[k_0-j_0,k_0+j_0]}-E\right)\right\|^{-1}<(|k_0|+1)^{A\cdot C_w+C_{\psi}'}e^{-\frac{1}{5}N}.
    \end{equation*}
    
    Let $N_3=N^{C_3}$, where $C_3>0$ is sufficiently large depending on $C_0,C_2$. Denote $$\mathcal{E}_{\omega,s}=\bigcup\limits_{|j_0|<N_2}\left(\textup{Spec }\widetilde{H}_{1,[k_0-j_0,k_0+j_0]}(\lambda,x_0,y_0,\omega)\bigcap[s-1,s+2]\right).$$ 
    We claim that
    \begin{equation}\label{condition}
        (x_0+k_0\omega_1+n\omega_1,y_0+k_0\omega_2+n\omega_2)\notin \bigcup_{E'\in\mathcal{E}_{\omega,s}}X_N(E')\textup{ (mod 1)},\textup{ for all}~N_3^{\frac{1}{2}}\leq|n|\leq 2N_3
    \end{equation}
    holds.
    Then by the Neumann series
    \begin{align*}
        \widetilde{G}_{\Lambda}(\lambda,E,x_0,y_0,\omega)&=(\widetilde{H}_{0,\Lambda}-E\widetilde{W}_\Lambda)^{-1}=(H_{0,\Lambda}-E'\widetilde{W}_\Lambda+(E'-E)\widetilde{W}_\Lambda)^{-1}\notag\\
        &=\widetilde{G}_{\Lambda}(\lambda,E',x_0,y_0,\omega)\sum_{l\geq 0}[\widetilde{G}_{\Lambda}(\lambda,E',x_0,y_0,\omega)(E-E')\widetilde{W}_\Lambda(y_0,\omega_2)]^l,
    \end{align*}
    we obtain the off-diagonal decay for Green's function with $E'$ replaced by $E$
    \begin{align*}
        &|\widetilde{G}_{[k_0+n-N,k_0+n+N]}(\lambda,E,x_0,y_0,\omega)(n_1,n_2)|\notag\\
        <& 2e^{N^b-\frac{1}{2}|n_1-n_2|}\cdot \left[1+\sum_{l\geq 0} (\|w_1\|(|k_0|+1)^{AC_w+C_{\psi}'}e^{N^b-\frac{1}{5}N})^l\right] \\<& 10e^{N^{b}-\frac{1}{2}|n_1-n_2|},
    \end{align*}
    since $|E-E'|= d(E,\textup{Spec }\widetilde{H}_{1,[k_0-j_0,k_0+j_0]})<(|k_0|+1)^{A\cdot C_w+C_{\psi}'}e^{-\frac{1}{5}N}.$
    
    Let $\widetilde{\Lambda}=\bigcup\limits_{N_3^{\frac{1}{2}}\leq |n|\leq2N_3}[k_0+n-N,k_0+n+N]$.
     Lemma \ref{coupling lemma} implies that
    \begin{align*}
        &\|\widetilde{G}_{\widetilde{\Lambda}}(\lambda,E,x_0,y_0,\omega)\|\leq e^{N},\\
        &|\widetilde{G}_{\widetilde{\Lambda}}(\lambda,E,x_0,y_0,\omega)(n_1,n_2)|<e^{-\frac{1}{4}|n_1-n_2|},\textup{ for }|n_1-n_2|>\frac{N_3}{10}.
    \end{align*}
    Thus, by \eqref{represenofpsi} for $\frac{1}{2}N_3\leq |j|\leq N_3$,
    \begin{equation*}
        |\psi(k_0+j)|\leq 2N_3^{C_{\psi}}|\widetilde{G}_{\widetilde{\Lambda}}(\lambda,E,x,y,\omega)(j,N_3^{\frac{1}{2}})|+N_3\cdot (|\lambda|+|s|+1)\cdot e^{-N^3}\cdot C_{\psi}(|k_0|+N_3)^{C_{\psi}'}\cdot e^{N}
        \leq e^{-\frac{1}{17}|j|},
    \end{equation*}
    which implies
    \begin{equation*}
        |\psi(j)|\leq e^{-\frac{1}{17}|j-k_0|}< e^{-\frac{1}{18}|j|},\textup{for all}~\frac{1}{2}N_3\leq |j|\leq N_3.
    \end{equation*}
    
    Now we prove the claim \eqref{condition}. 
    For $|j|\leq N_2$, define 
    \begin{align*}
        G_j=\{&(\omega,E',x,y)\in \mathbb{T}^2\times [s-1,s+2]\times\mathbb{T}^2,\omega\textup{ satisfying \eqref{DC} and \eqref{DC2}, } \\
        &E'\in(\textup{Spec }\widetilde{H}_{1,[k_0-j,k_0+j]}(\lambda,x_0,y_0,\omega)\cap[s-1,s+2]),(x,y)\in X_N(E')\}.
    \end{align*}
    Fix $\omega\in G_j$. Note that $E'\in[s-1,s+2]$ are eigenvalues of $\widetilde{H}_{1,[k_0-j,k_0+j]}(\lambda,x_0,y_0,\omega)$, the number of which is less than $2j+1$. Therefore
    \begin{equation*}
        \textup{mes }\bigcup_{E'}X_N(E')\leq N_2e^{-N^{\kappa}}\leq e^{-\frac{1}{2}N^{\kappa}}.
    \end{equation*}
    Denote $S_j=\textup{Proj}_{(\omega,x,y)}G_j$. By the Fubini theorem, we obtain
    \begin{equation*}
        \textup{mes }S_j\leq e^{-\frac{1}{2}N^{\kappa}}.
    \end{equation*}
    On the one hand, conditions \eqref{DC} and \eqref{DC2} are equivalent to  polynomial inequalities of degree $d=1$, the number of which is less than $8N^3+4mN_2^2$. On the other hand, $E'\in \mathcal{E}_{\omega,s}$ if and only if $\textup{det }(\widetilde{H}_{1,[k_0-j,k_0+j]}(\lambda,x_0,y_0,\omega)-E')=0$ and $(E'-(s-1))(E'-(s+2))\leq 0$, the degree in $(E',\omega)$ of which is at most $N_2N^4$.  Moreover, from the definition of $X_N$ in Lemma \ref{inductivelemma}, it is also a semi-algebraic set in $(E',\omega,x,y)$ with degree at most $N^{C_1}$.  Hence $G_j$ is a semi-algebraic set with degree at most $N_2^{4}$ for a sufficiently large $C_2$. Moreover, $S_j$ is a semi-algebraic set with degree at most $N_2^{4C_0}$ by Lemma \ref{semi}, where $C_0 = C_0(n)$ for $n=5$.

    Take $n=2,B=N_2^{4C_0},\eta=e^{-\frac{1}{2}N^{\kappa}},\epsilon=N_3^{-\frac{1}{10}}$. Through Lemma \ref{semi2}, we have $S_j=S_{j,1}\cup S_{j,2}$ with
    \begin{equation*}
        \textup{mes}(\textup{Proj}_{\omega}S_{j,1})<N_2^{4C_0C'_0}N_3^{-\frac{1}{10}}< N_3^{-\frac{1}{11}}~\textup{for  sufficiently large $C_3$,}
    \end{equation*}
    where $C'_0 =C_0(n) $ for $n=4$.
    Now consider the intersection of $S_{j,2}$ with sets
    \begin{equation*}
        \{(\omega,x_0+k_0\omega_1+n\omega_1,y_0+k_0\omega_2+n\omega_2)\in\mathbb{T}^2\times \mathbb{T}^2\},~N_3^{\frac{1}{2}}\leq |n|\leq 2N_3.
    \end{equation*}
    Decompose 
    \begin{equation*}
        \mathbb{T}^2=\bigcup\limits_{|\alpha|\leq 4N_3^2}\left((x_{\alpha},y_{\alpha})+\left[0,\frac{1}{2N_3}\right]^2\right).
    \end{equation*}
    Fix $n,\alpha$. Then $\omega$ can be represented as $\omega=(x_{\alpha},y_{\alpha})+\omega'$ for some $\omega'\in\left[0,\dfrac{1}{2N_3}\right]^2$. Let
    \begin{equation*}
        L(n,\alpha)=\left\{(\omega',x_0+(k_0+n)x_{\alpha}+(k_0+n)\omega_1',y_0+(k_0+n)y_{\alpha}+(k_0+n)\omega_2'):\omega'\in\left[0,\dfrac{1}{2N_3}\right]^2\right\}
    \end{equation*}
    Then
    \begin{equation*}
        \max|\textup{Proj}_{L}(e_j)|<2N_3^{-\frac{1}{2}}<\epsilon^2.
    \end{equation*}
    By Lemma \ref{semi2}, we have
    \begin{equation*}
    \begin{aligned}
        \textup{mes}_2\left( L_{n,\alpha} \cap  S_{j,2}\right)<N_2^{4C_0C'_0}N_3^{\frac{1}{10}}e^{-\frac{1}{8}N^{\kappa}}.
    \end{aligned}
    \end{equation*}
    Summing up contributions over $n$ and $\alpha$, we obtain
    \begin{equation*}
    \begin{aligned}
        &\textup{mes}\{\omega\in\mathbb{T}^2|~ (\omega,x_0+(k_0+n)\omega_1,y_0+(k_0+n)\omega_2)\in S_{j,2}~ \textup{{for some~}}  N_3^{\frac{1}{2}}\leq|n|\leq 2N_3\}\\ \le& N_3^3N_2^{4C'_0C_0}N_3^{\frac{1}{10}}e^{-\frac{1}{8}N^{\kappa}}\\ \leq& e^{-\frac{1}{9}N^{\kappa}}.
        \end{aligned}
    \end{equation*}
    Thus for fixed $j$ we have
    \begin{align*}
        &\textup{mes}\{\omega\in\mathbb{T}^2|(x_0+(k_0+n)\omega_1,y_0+(k_0+n)\omega_2)\in S_j ~\textup{{for some~}}  N_3^{\frac{1}{2}}\leq|n|\leq 2N_3\}\notag\\
        \leq &\textup{mes}\{\omega\in\mathbb{T}^2| (x_0+(k_0+n)\omega_1,y_0+(k_0+n)\omega_2)\in S_{j,1} ~\textup{{for some~}}  N_3^{\frac{1}{2}}\leq|n|\leq 2N_3\}\notag\\
        &+\textup{mes}\{\omega\in\mathbb{T}^2| (x_0+(k_0+n)\omega_1,y_0+(k_0+n)\omega_2)\in S_{j,2}~\textup{{for some~}}  N_3^{\frac{1}{2}}\leq|n|\leq 2N_3 \}\notag\\
        \leq & N_3^{-\frac{1}{11}}+e^{-\frac{1}{9}N^{\kappa}}\\
        \leq& N_3^{-\frac{1}{12}}.
    \end{align*}
    Finally, summing up contributions over $|j|\leq N_2$, we obtain that if define
    \begin{align*}
        \mathcal{R}_{k_0,s,N}\triangleq\{&\omega\in\mathbb{T}^2|\textup{ there exists }N_3^{\frac{1}{2}}\leq|n|\leq 2N_3 \textup{ such that }\notag\\
        &(x_0+(k_0+n)\omega_1,y_0+(k_0+n)\omega_2)\in\bigcup_{E'\in\mathcal{E}_{\omega,s}}X_N(E')\},
    \end{align*}
    then
    \begin{align*}
        \textup{mes }\mathcal{R}_{k_0,s,N}\leq N_2N_3^{-\frac{1}{12}}<N^{-\frac{C_3}{13}}<N^{-10}.
    \end{align*}
    For $\omega\notin \mathcal{R}_{k_0,s,N}$, we have
    \begin{equation*}
        |\psi(j)|\leq e^{-\frac{1}{18}|j|},\textup{for all }\frac{1}{2}N^{C_3}\leq |j|\leq N^{C_3}.
    \end{equation*}
    Let
    \begin{equation*}
        \mathcal{R}_{k_0,s}=\bigcap_{l\geq |\lambda|+|k_0|+|s|+1}\bigcup_{N\geq l}\mathcal{R}_{k_0,s,N},
    \end{equation*}
    then $\textup{mes }\mathcal{R}_{k_0,s}=0$. Notice that $\mathcal{R}_{k_0,s}$ depends on the choice of $k_0$ and the interval $[s,s+1]$. Let
    \begin{equation*}
        \widetilde{\mathcal{R}}=\bigcup_{k_0,s\in\mathbb{Z}}\mathcal{R}_{k_0,s},
    \end{equation*}
    then $\textup{mes }\widetilde{\mathcal{R}}=0$. Therefore, for fixed $\lambda$, if $\omega\notin \widetilde{\mathcal{R}}$, then $\omega\notin\mathcal{R}_{k_0,s}$ for any $k_0$ and $s$. Hence, arbitrarily choosing a generalized eigenvector $\psi$, $\psi$ satisfies that $H_0(\lambda,x_0,\omega_1)\psi=EW(y_0,\omega_2)\psi$ for some $E\in [s,s+1]$ and $\psi(k_0)\neq 0$ for some $k_0 \in \Z$. Since $\omega \notin \mathcal{R}_{k_0,s}$, there exists $N_0(\omega,k_0,E)\geq 1$ such that $\omega\notin\mathcal{R}_{k_0,s,N}$, for $N\geq N_0$. Thus we have
    \begin{align*}
        |\psi(j)|&\leq e^{-\frac{1}{18}|j|},\textup{for all}~\frac{1}{2}N_0^{C_3}\leq |j|<\infty,\\
        |\psi(j)|&\leq\widetilde{C}_{\psi,\omega} e^{-\frac{1}{18}|j|}
    \end{align*}
    for $ \widetilde{C}_{\psi,\omega}
        =C_{\psi} N_0(k_0,E,\omega)^{C_3 C_{\psi'}}$. 
\end{proof}

From the above result, we obtain that for all $(x_0,y_0)$ such that $y_0$ does not belong to zeros of $w$, and $\omega \notin \mathcal{N}:=\mathcal{F}(y_0) \cup \widetilde{\mathcal{R}}(x_0,y_0)$, all the generalized eigenvalues of $\overline{H}$ are eigenvalues, and the corresponding eigenvectors decay exponentially. From the relation between $\overline{H}$ and $\overline{H}_1$ discussed in Section 2.2, the same holds for $\overline{H}_1$. By Proposition \ref{schnol}, the spectrum of $\overline{H}_1$ are all eigenvalues with exponential decay eigenvectors. From the relation between $\overline{H}$ and $\overline{H}_1$ again, the spectrum of $\overline{H}$ are all eigenvalues with exponential decay eigenvectors.

\section*{Ethics Declarations}

\noindent\textbf{Data Availability:} No datasets were generated or analyzed during the current study.

\vspace{10pt}

\noindent\textbf{Conflict of Interest:} The authors declare no conflict of interest  to disclose.

\bibliography{ref}

@article {Quanti-Green,
    AUTHOR = {Cao, H. and Shi, Y. and Zhang, Z.},
     TITLE = {Quantitative {G}reen's function estimates for lattice
              quasi-periodic {S}chr\"odinger operators},
   JOURNAL = {Sci. China Math.},
  FJOURNAL = {Science China. Mathematics},
    VOLUME = {67},
      YEAR = {2024},
    NUMBER = {5},
     PAGES = {1011--1058},
      ISSN = {1674-7283,1869-1862},
   MRCLASS = {35J10 (35J08 35R02 37K55 37K60 81Q10)},
  MRNUMBER = {4739556},
       DOI = {10.1007/s11425-022-2126-8},
       URL = {https://doi.org/10.1007/s11425-022-2126-8},
}

@article {BGS02,
    AUTHOR = {J. Bourgain and M. Goldstein and W. Schlag},
     TITLE = {Anderson localization for {S}chr\"odinger operators on {$\mathbf
              Z^2$} with quasi-periodic potential},
   JOURNAL = {Acta Math.},
  FJOURNAL = {Acta Mathematica},
    VOLUME = {188},
      YEAR = {2002},
    NUMBER = {1},
     PAGES = {41--86},
      ISSN = {0001-5962,1871-2509},
   MRCLASS = {81Q10 (47B80 47N50 60J45 82B44)},
  MRNUMBER = {1947458},
MRREVIEWER = {Svetlana\ Jitomirskaya},
       DOI = {10.1007/BF02392795},
       URL = {https://doi.org/10.1007/BF02392795},
}

@article{Aubry,
  title={The discrete Frenkel-Kontorova model and its extensions : I. Exact results for the ground-states},
  author={ Aubry, S.  and  Daeron, P. Y. Le },
  journal={Physica D Nonlinear Phenomena},
  volume={8},
  number={3},
  pages={381-422},
  year={1983},
}

@book {Green-book,
    AUTHOR = {Bourgain, J.},
     TITLE = {Green's function estimates for lattice {S}chr\"odinger
              operators and applications},
    SERIES = {Annals of Mathematics Studies},
    VOLUME = {158},
 PUBLISHER = {Princeton University Press, Princeton, NJ},
      YEAR = {2005},
     PAGES = {x+173},
      ISBN = {0-691-12098-6},
   MRCLASS = {35Q40 (35A08 35Q55 37N20 47B80 47N50 81Q10 82B44)},
  MRNUMBER = {2100420},
MRREVIEWER = {David\ Damanik},
       DOI = {10.1515/9781400837144},
       URL = {https://doi.org/10.1515/9781400837144},
}

@book {Jacobibook,
    AUTHOR = {Teschl, G.},
     TITLE = {Jacobi operators and completely integrable nonlinear lattices},
    SERIES = {Mathematical Surveys and Monographs},
    VOLUME = {72},
 PUBLISHER = {American Mathematical Society, Providence, RI},
      YEAR = {2000},
     PAGES = {xvii+351},
      ISBN = {0-8218-1940-2},
   MRCLASS = {39A70 (35Q58 37K60 39A12 47B36 82B23)},
  MRNUMBER = {1711536},
MRREVIEWER = {Malcolm\ R.\ Adams},
       DOI = {10.1090/surv/072},
       URL = {https://doi.org/10.1090/surv/072},
}

@article {almost-reduc-and-locliz,
    AUTHOR = {Ge, L.},
     TITLE = {On the almost reducibility conjecture},
   JOURNAL = {Geom. Funct. Anal.},
  FJOURNAL = {Geometric and Functional Analysis},
    VOLUME = {34},
      YEAR = {2024},
    NUMBER = {1},
     PAGES = {32--59},
      ISSN = {1016-443X,1420-8970},
   MRCLASS = {37H15 (47B36)},
  MRNUMBER = {4706442},
MRREVIEWER = {Zhenghe\ Zhang},
       DOI = {10.1007/s00039-024-00671-0},
       URL = {https://doi.org/10.1007/s00039-024-00671-0},
}

@book {DFbook,
    AUTHOR = {D. Damanik and J. Fillman},
     TITLE = {One-dimensional ergodic {S}chr\"odinger operators---{I}.
              {G}eneral theory},
    SERIES = {Graduate Studies in Mathematics},
    VOLUME = {221},
 PUBLISHER = {American Mathematical Society, Providence, RI},
      YEAR = {[2022] \copyright 2022},
     PAGES = {xv+444},
      ISBN = {978-1-4704-5606-1; [9781470470869]; [9781470470852]},
   MRCLASS = {47-01 (28Axx 35Q41 37Axx 47B36 81Q10 82B44)},
  MRNUMBER = {4567742},
MRREVIEWER = {Christian\ Seifert},
}

@article {latticeKPP,
    AUTHOR = {X. Liang and H. Wang and Q. Zhou and T. Zhou},
     TITLE = {Traveling fronts for {F}isher-{KPP} lattice equations in
              almost-periodic media},
   JOURNAL = {Ann. Inst. H. Poincar\'e{} C Anal. Non Lin\'eaire},
  FJOURNAL = {Annales de l'Institut Henri Poincar\'e{} C. Analyse Non
              Lin\'eaire},
    VOLUME = {41},
      YEAR = {2024},
    NUMBER = {5},
     PAGES = {1179--1237},
      ISSN = {0294-1449,1873-1430},
   MRCLASS = {35K57 (35B15 35P05)},
  MRNUMBER = {4782461},
MRREVIEWER = {Bj\"orn\ de Rijk},
       DOI = {10.4171/aihpc/101},
       URL = {https://doi.org/10.4171/aihpc/101},
}

@article {dynam-locliz,
    AUTHOR = {Ge, L. and You, J. and Zhou, Q.},
     TITLE = {Exponential dynamical localization: criterion and
              applications},
   JOURNAL = {Ann. Sci. \'Ec. Norm. Sup\'er. \rm{(4)}},
  FJOURNAL = {Annales Scientifiques de l'\'Ecole Normale Sup\'erieure.
              Quatri\`eme S\'erie},
    VOLUME = {56},
      YEAR = {2023},
    NUMBER = {1},
     PAGES = {91--126},
      ISSN = {0012-9593,1873-2151},
   MRCLASS = {37A60 (37H10 81Q10)},
  MRNUMBER = {4637128},
MRREVIEWER = {Francesco\ Fidaleo},
}

@article {localz-survey,
    AUTHOR = {Spencer, T.},
     TITLE = {Mathematical aspects of {A}nderson localization},
   JOURNAL = {Internat. J. Modern Phys. B},
  FJOURNAL = {International Journal of Modern Physics B},
    VOLUME = {24},
      YEAR = {2010},
    NUMBER = {12-13},
     PAGES = {1621--1639},
      ISSN = {0217-9792,1793-6578},
   MRCLASS = {82B44 (47B80 47N50 60K37)},
  MRNUMBER = {2658055},
MRREVIEWER = {David\ Damanik},
       DOI = {10.1142/S0217979210064538},
       URL = {https://doi.org/10.1142/S0217979210064538},
}

@article {singular-jacobi-random,
    AUTHOR = {Rangamani, N.},
     TITLE = {Singular-unbounded random {J}acobi matrices},
   JOURNAL = {J. Math. Phys.},
  FJOURNAL = {Journal of Mathematical Physics},
    VOLUME = {60},
      YEAR = {2019},
    NUMBER = {8},
     PAGES = {081904, 11},
      ISSN = {0022-2488,1089-7658},
   MRCLASS = {47B36 (60H25)},
  MRNUMBER = {3997122},
MRREVIEWER = {Mira\ Shamis},
       DOI = {10.1063/1.5085027},
       URL = {https://doi.org/10.1063/1.5085027},
}

@article{mero-potent,
author = {X. Zhang},
title = {Anderson localization for block Jacobi operators with quasi-periodic meromorphic potential},
journal = {Mathematical Methods in the Applied Sciences},
volume = {47},
number = {16},
pages = {12816-12832},
doi = {https://doi.org/10.1002/mma.10182},
url = {https://onlinelibrary.wiley.com/doi/abs/10.1002/mma.10182},
eprint = {https://onlinelibrary.wiley.com/doi/pdf/10.1002/mma.10182},
year = {2024}
}

@article {singular-potent-longrange,
    AUTHOR = {Jian, W. and Shi, J. and Yuan, X.},
     TITLE = {Anderson localization for long-range operators with singular
              potentials},
   JOURNAL = {J. Math. Phys.},
  FJOURNAL = {Journal of Mathematical Physics},
    VOLUME = {62},
      YEAR = {2021},
    NUMBER = {2},
     PAGES = {Paper No. 022703, 16},
      ISSN = {0022-2488,1089-7658},
   MRCLASS = {81Q10 (30D20 31A15 37D25 39A22 47A10 47N50)},
  MRNUMBER = {4217591},
MRREVIEWER = {Mostafa\ Sabri},
       DOI = {10.1063/5.0022089},
       URL = {https://doi.org/10.1063/5.0022089},
}

@article {maryland,
    AUTHOR = {Jitomirskaya, S. and Liu, W.},
     TITLE = {Arithmetic spectral transitions for the {M}aryland model},
   JOURNAL = {Comm. Pure Appl. Math.},
  FJOURNAL = {Communications on Pure and Applied Mathematics},
    VOLUME = {70},
      YEAR = {2017},
    NUMBER = {6},
     PAGES = {1025--1051},
      ISSN = {0010-3640,1097-0312},
   MRCLASS = {47A10 (37A45 39A60 82B10)},
  MRNUMBER = {3639318},
MRREVIEWER = {Sophie\ Grivaux},
       DOI = {10.1002/cpa.21688},
       URL = {https://doi.org/10.1002/cpa.21688},
}

@article {singular-potent,
    AUTHOR = {Jitomirskaya, S. and Yang, F.},
     TITLE = {Singular continuous spectrum for singular potentials},
   JOURNAL = {Comm. Math. Phys.},
  FJOURNAL = {Communications in Mathematical Physics},
    VOLUME = {351},
      YEAR = {2017},
    NUMBER = {3},
     PAGES = {1127--1135},
      ISSN = {0010-3616,1432-0916},
   MRCLASS = {39A70 (47B36)},
  MRNUMBER = {3623248},
MRREVIEWER = {Luis\ Verde-Star},
       DOI = {10.1007/s00220-016-2823-4},
       URL = {https://doi.org/10.1007/s00220-016-2823-4},
}

@article{left-defini,
title = {Left-Definite Sturm–Liouville Problems},
journal = {Journal of Differential Equations},
volume = {177},
number = {1},
pages = {1-26},
year = {2001},
issn = {0022-0396},
doi = {https://doi.org/10.1006/jdeq.2001.3997},
url = {https://www.sciencedirect.com/science/article/pii/S002203960193997X},
author = {Q. Kong and H. Wu A. Zettl}
}

@book {specODEbook,
    AUTHOR = {Weidmann, J.},
     TITLE = {Spectral theory of ordinary differential operators},
    SERIES = {Lecture Notes in Mathematics},
    VOLUME = {1258},
 PUBLISHER = {Springer-Verlag, Berlin},
      YEAR = {1987},
     PAGES = {vi+303},
      ISBN = {3-540-17902-X},
   MRCLASS = {47E05 (34-02 34B25)},
  MRNUMBER = {923320},
MRREVIEWER = {Tuncay\ Aktosun},
       DOI = {10.1007/BFb0077960},
       URL = {https://doi.org/10.1007/BFb0077960},
}

@article {eigenasym,
    AUTHOR = {Atkinson, F. V. and Mingarelli, A. B.},
     TITLE = {Asymptotics of the number of zeros and of the eigenvalues of
              general weighted {S}turm-{L}iouville problems},
   JOURNAL = {J. Reine Angew. Math.},
  FJOURNAL = {Journal f\"ur die Reine und Angewandte Mathematik. [Crelle's
              Journal]},
    VOLUME = {375/376},
      YEAR = {1987},
     PAGES = {380--393},
      ISSN = {0075-4102,1435-5345},
   MRCLASS = {34B25 (34E99 47E05)},
  MRNUMBER = {882305},
MRREVIEWER = {John\ Adam},
       DOI = {10.1515/crll.1987.375-376.380},
       URL = {https://doi.org/10.1515/crll.1987.375-376.380},
}

@article{indefi-general, 
title={Indefinite Sturm–Liouville problems},
volume={133}, 
DOI={10.1017/S0308210500002584},
number={3},
journal={Proceedings of the Royal Society of Edinburgh: Section A Mathematics}, 
author={Kong, Q. and Wu, H. and Zettl, A. and Möller, M.}, year={2003}, 
pages={639–652}}

@article {eigenasym2,
    AUTHOR = { P. A. Binding and  P. J. Browne and B. A. Watson },
     TITLE = {Spectral asymptotics for {S}turm-{L}iouville equations with
              indefinite weight},
   JOURNAL = {Trans. Amer. Math. Soc.},
  FJOURNAL = {Transactions of the American Mathematical Society},
    VOLUME = {354},
      YEAR = {2002},
    NUMBER = {10},
     PAGES = {4043--4065},
      ISSN = {0002-9947,1088-6850},
   MRCLASS = {34L20 (34B24)},
  MRNUMBER = {1926864},
MRREVIEWER = {Vyacheslav\ N.\ Pivovarchik},
       DOI = {10.1090/S0002-9947-02-03023-4},
       URL = {https://doi.org/10.1090/S0002-9947-02-03023-4},
}

@article{FK-model-paper,
  title = {Nonlinear dynamics of the Frenkel-Kontorova model with impurities},
  author = {Braun, O. M. and Kivshar, Y. S.},
  journal = {Phys. Rev. B},
  volume = {43},
  issue = {1},
  pages = {1060--1073},
  numpages = {0},
  year = {1991},
  month = {Jan},
  publisher = {American Physical Society},
  doi = {10.1103/PhysRevB.43.1060},
  url = {https://link.aps.org/doi/10.1103/PhysRevB.43.1060}
}

@article {counter-completen,
    AUTHOR = {Fleige, A.},
     TITLE = {A counterexample to completeness properties for indefinite
              {S}turm-{L}iouville problems},
   JOURNAL = {Math. Nachr.},
  FJOURNAL = {Mathematische Nachrichten},
    VOLUME = {190},
      YEAR = {1998},
     PAGES = {123--128},
      ISSN = {0025-584X,1522-2616},
   MRCLASS = {34B24 (34L10 34L40)},
  MRNUMBER = {1611680},
       DOI = {10.1002/mana.19981900106},
       URL = {https://doi.org/10.1002/mana.19981900106},
}

@article {completness,
    AUTHOR = {Kostenko, A.},
     TITLE = {The similarity problem for indefinite {S}turm-{L}iouville
              operators and the {HELP} inequality},
   JOURNAL = {Adv. Math.},
  FJOURNAL = {Advances in Mathematics},
    VOLUME = {246},
      YEAR = {2013},
     PAGES = {368--413},
      ISSN = {0001-8708,1090-2082},
   MRCLASS = {34B20 (26D10 35Q84 47A75)},
  MRNUMBER = {3091810},
MRREVIEWER = {Petru\ A.\ Cojuhari},
       DOI = {10.1016/j.aim.2013.05.025},
       URL = {https://doi.org/10.1016/j.aim.2013.05.025},
}

@book {FK-model-book,
    AUTHOR = {Braun, O. M. and Kivshar, Y. S.},
     TITLE = {The {F}renkel-{K}ontorova model},
    SERIES = {Texts and Monographs in Physics},
      NOTE = {Concepts, methods, and applications},
 PUBLISHER = {Springer-Verlag, Berlin},
      YEAR = {2004},
     PAGES = {xviii+472},
      ISBN = {3-540-40771-5},
   MRCLASS = {82-01 (37K60 37N20 82C20)},
  MRNUMBER = {2035039},
MRREVIEWER = {Dimitri\ Petritis},
       DOI = {10.1007/978-3-662-10331-9},
       URL = {https://doi.org/10.1007/978-3-662-10331-9},
}

@article {JLY20,
    AUTHOR = {S. Jitomirskaya and W. Liu and Y. Shi},
     TITLE = {Anderson localization for multi-frequency quasi-periodic
              operators on {${\mathbb Z}^D$}},
   JOURNAL = {Geom. Funct. Anal.},
  FJOURNAL = {Geometric and Functional Analysis},
    VOLUME = {30},
      YEAR = {2020},
    NUMBER = {2},
     PAGES = {457--481},
      ISSN = {1016-443X,1420-8970},
   MRCLASS = {81Q10 (47A10 47B36 47N50)},
  MRNUMBER = {4108613},
MRREVIEWER = {Zhiyan\ Zhao},
       DOI = {10.1007/s00039-020-00530-8},
       URL = {https://doi.org/10.1007/s00039-020-00530-8},
}

\end{document}